\newtheorem{definition}{Definition}
\newtheorem{theorem}{Theorem}
\newtheorem{lemma}[theorem]{Lemma}
\newtheorem{remark}{Remark}
\newtheorem{corollary}[theorem]{Corollary}
\newcommand\blfootnote[1]{%
  \begingroup
  \renewcommand\thefootnote{}\footnote{#1}%
  \addtocounter{footnote}{-1}%
  \endgroup
}
\begin{document}
\title{On State Dependent Broadcast Channels with Cooperation}
\author{Lior Dikstein, Haim H. Permuter and Yossef Steinberg}
\maketitle

\maketitle
\blfootnote{L. Dikstein and H. Permuter are with the department of Electrical and Computer Engineering, Ben-Gurion University of the Negev, Beer-Sheva, Israel (liordik@post.bgu.ac.il, haimp@bgu.ac.il). Y. Steinberg is with the Department of Electrical Engineering, Technion, Haifa, Israel (ysteinbe@ee.technion.ac.il). This paper was presented in part at the 2013 51st Annual Allerton Conference on Communication, Control, and Computing (Allerton). This work was supported by the Israel Science Foundation (grant no. 684/11).}

\begin{abstract}
In this paper, we investigate problems of communication over physically degraded, state-dependent broadcast channels (BCs) with cooperating decoders. Two different setups are considered and their capacity regions are characterized. First, we study a setting in which one decoder can use a finite capacity link to send the other decoder information regarding the messages or the channel states. In this scenario we analyze two cases: one where noncausal state information is available to the encoder and the strong decoder and the other where state information is available only to the encoder in a causal manner. Second, we examine a setting in which the cooperation between the decoders is limited to taking place before the outputs of the channel are given. In this case, one decoder, which is informed of the state sequence noncausally, can cooperate only to send the other decoder rate-limited information about the state sequence. The proofs of the capacity regions introduce a new method of coding for channels with cooperation between different users, where we exploit the link between the decoders for multiple-binning. Finally, we discuss the optimality of using rate splitting techniques when coding for cooperative BCs. In particular, we show that rate splitting is not necessarily optimal when coding for cooperative BCs by solving an example in which our method of coding outperforms rate splitting.
\end{abstract}

\begin{keywords}
Binning, broadcast channels, causal coding, channel capacity, cooperative broadcast, degraded broadcast channel, noncausal coding, partial side information, side information, state-dependent channels.
\end{keywords}

\section{Introduction}
\par Classical broadcast channels (BCs) adequately model a variety of practical communication scenarios, such as cellular systems, Wi-Fi routers and digital TV broadcasting. However, with the rapid growth of wireless networking, it is necessary to expand the study of such channels and to consider more complex settings that can more accurately describe a wider range of scenarios. Some of these important extensions include settings in which the BC is state dependent. Wireless channels with fading, jamming or interference in the transmission are but a few examples that can be modeled by state-dependent settings. Other important extensions include cooperation between different nodes in a network, where the nodes help one another in decoding. These settings can, inter alia, describe sensor networks, in which a large number of nodes conduct measurements on some ongoing process. When such measurements are correlated, cooperation between nodes can assist them in decoding. Some practical sensor network applications include surveillance, health monitoring and environmental detection. Therefore, the results presented in this work will contribute to meeting the growing need to find the fundamental limits of such important communication scenarios.

\par The most general form of the BC, in which a single source attempts to communicate simultaneously to two or more receivers, was introduced by Cover in \cite{CoverBroadcast}. Following his work, the capacity of the degraded BC was characterized by Bergmans \cite{BergmansBroadcast} and Gallager \cite{GallagerBroadcast}. In the degraded BC setting, one receiver is statistically stronger than the other. This scenario can, for instance, model TV broadcasts, where some users consume high definition media, while other users watch the same broadcast with lower resolution. %Bergmans \cite{BergmansBroadcast} established an achievable scheme for the degraded BC and a converse for the Gaussian case \cite{Bergman74BroadcastGaussian}. Gallager \cite{GallagerBroadcast} extended this result and proved a converse for the general degraded BC, together characterizing the capacity of the degraded BC. As part of the achievability proof, Bergmans introduced a superposition coding scheme. In this scheme, the codewords intended for the strong user are arranged as satellites around codewords intended for the weaker user, called "cloud centers". Upon decoding a satellite codeword and obtaining the information intended for it specifically, the strong decoder can consequently uniquely decode the cloud center codeword as well, thus acquiring the information intended for the weaker user.
 In \cite{ElGammalClassOfBroadcast}, El-Gamal expanded the capacity result for the degraded BC and in \cite{ElGammalBroadcastFeedback}-\cite{ElGammalPhysically} discussed the two-user BC with and without feedback. Later, in \cite{ElGamal: FeedbackCapacity} El-Gamal showed that the capacity of the two-user physically degraded BC does not change with feedback.

\par State-dependent channels were first introduced by Shannon \cite{Shannon58}, who characterized the capacity for the case where a single user channel, $P_{Y|X,S}$, is controlled by a random parameter $S$, and the state information sequence up to time $i$, $s^i$, is known causally at the encoder. The case in which the full realization of the channel states, $s^n$, is known noncausally at the encoder was presented by Kuznetsov and Tsybakov \cite{{KuzTsy74}}, in the context of coding for a storage unit, and similar cases were studied by Heegard and El Gamal in \cite{HeegardElGamal_state_encoded83}. The capacity of the channel for this case was fully solved by  Gel'fand and Pinsker in \cite{GePi80}.
%The capacity is achieved using a binning coding scheme. The main idea is to generate a subcodebook for each message. Now to send a message, the encoder looks in the subcodebook associated with the message for a codeword that is jointly typical with the state sequence.
In recent years, growing interest in state-dependent channels has resulted in many studies on multi-user settings. Some examples considering multiple access channels (MAC) include the works of Lapidoth and Steinberg \cite{Lapidoth2010MAC}, \cite{Lapidoth2013MAC}, Piantanida, Zaidi and Shamai \cite{ZaidiPiantanidaShamai}, Somekh-Baruch, Shamai and Verdu \cite{BaruchShamaiVerduMAC} and many more. In the case of BCs, Steinberg studied the degraded, state-dependent BC in \cite{steinberg_causal_BC_2005}. Inner and outer bounds were derived for the case in which the state information is known noncausally at the encoder and the capacity region was found for the case in which the states are known causally at the encoder or known noncausally at both the encoder and the strong decoder. Our channel setting with cooperation naturally extends this model, and the capacity results of this paper generalize the capacity regions found in \cite{steinberg_causal_BC_2005}.

\par Other important settings for state-dependent channels are cases where only rate-limited state information is available at the encoder or decoder. In many practical systems, information on the channel is not available freely. Thus, to provide side information on the channel's states to the different users, we must allocate resources such as time slots, bandwidth and memory. Heegard and El Gamal \cite{HeegardElGamal_state_encoded83} presented a model of a state-dependent channel, where the transmitter is informed of the state information at a rate-limited to $R_e$ and the receiver is informed of the state information at a rate-limited to $R_d$. Cover and Chiang \cite{CoCh02} extended the Gel'fand-Pinsker problem to the case where both the encoder and the decoder are provided with different, but correlated, partial side information. Steinberg in \cite{Steinberg08RateLimitedAtDecoder} derived the capacity of a channel where the state information is known fully to the encoder in a noncausal manner and is available rate-limited at the decoder. Coding for such a channel involves solving a Gel'fand-Pinsker problem as well as a Wyner-Ziv problem \cite{Wyner_ziv76} simultaneously. An extension of this setting to a degraded, state-dependent BC is introduced in this work.

\par Cooperation between different users in multi-user channels was first considered for MACs in the works of Willems \cite{Willems83_cooperating}. Further studies involving cooperation between encoders in MACs include papers such as the work of Willems and van der Meulen \cite{Willems85_cribbing_encoders}, and later \cite{Simeone:2009} and \cite{AmirSteinberg13}. A setting in which cooperation between users takes place in a state-dependent MAC, where partial state information is known to each user and full state information is known at the decoder, was treated by  Shamai, Permuter and Somekh-Baruch in \cite{PermuterShamaiSomekh_MessageStateCooperation}.

\par The notion of cooperation between receivers in a BC was proposed by Dabora and Servetto \cite{R.Dabora_S.ServettoRS}, where the capacity region for the cooperative physically degraded BC is characterized. Simultaneously, Liang and Veeravalli independently examined a more general problem of a relay-BC in \cite{LiangVeeravalli07RelayBroadcast}. The direct part of the proof for the capacity region of the cooperative physically degraded BC, combines methods of broadcast coding together with a code construction for the degraded relay channel. The BC setting we present in this work generalizes the model of Dabora and Servetto, and therefore our capacity result generalizes the result of \cite{R.Dabora_S.ServettoRS}. Moreover, the coding scheme we propose, that achieves the capacity of the more general physically degraded state-dependent cooperative BC, is fundamentally different and in some sense simpler, since it uses binning instead of block Markov coding.
%Moreover, we propose a different coding scheme, which uses triple-binning and superposition coding techniques. This coding scheme can also be applied to achieve the capacity of the cooperative degraded BC setting presented in \cite{R.Dabora_S.ServettoRS}.

\par In this work, we consider several scenarios of cooperation between decoders for physically degraded, state-dependent (PDSD) BCs. First, we consider a setting in which there is no constraint on when the cooperation link between the decoders, $C_{12}$, should be used. For this setting, we characterize the capacity region for the noncausal case, where noncausal state information is known at the encoder and the strong decoder, and for the causal case, where causal state information is available only at the encoder. The proof proposes a new coding method for channels with cooperating users, using multiple-binning. We suggest dividing the weak decoder's message set into bins, where the number of bins is determined by the capacity link between the decoders. The strong decoder will use the cooperation link to send the weak decoder the bin number containing its message, and hence narrow down the search from the entire message set, to the message set of that bin alone. This scheme increases the rate of the weak user.
\par The optimal schemes of the first scenario use the cooperation link between the decoders, $C_{12}$, solely for sending additional information about the messages, i.e., information about the state sequence is not sent explicitly via $C_{12}$. The second setting we consider is a case in which the cooperation link $C_{12}$ can be used only before the outputs are given. In such a case, the strong decoder can use the cooperation link only to convey rate-limited state information to the weaker user. This setting can be regarded as a broadcast extension of the results in \cite{Steinberg08RateLimitedAtDecoder}. The capacity region for this case is derived by using the methods we developed when solving the first scenario, combined with Wyner-Ziv coding.

%i.e. the cooperation link can be used after the strong decoder decoded its transmission, to send information about the messages it decoded, or on the other hand, the link can be used prior to the decoding of the messages, in which case the strong decoder can send information about the channel states.

\par Another interesting result involves the use of rate splitting techniques when coding for cooperative BC. In MACs, rate splitting, which is the most common coding method for dealing with cooperation, has been used to achieve the capacity for most settings that have been solved. Thus, a first guess would be to use this method when coding for cooperative BCs. However, we show that rate splitting schemes are not necessarily optimal for BCs. Moreover, we demonstrate that out method of coding, using binning, strictly outperforms rate splitting. An example for which rate splitting is shown to be suboptimal is the binary symmetric BC with cooperation.

\par The remainder of the paper is organized as follows. Section \ref{SectionNotation} presents the mathematical notation and definitions used in this paper. In Section \ref{SectionMainResults}, all the main results are stated, which include the capacity regions of three PDSD BC settings. Section \ref{SectionCapacity1} is devoted to the noncausal PDSD BC and a discussion of special cases, Section \ref{SectionCapacity2} is dedicated to the causal PDSD BC, and Section \ref{SectionCapacity3} is dedicated to the PDSD BC with rate-limited state information. In Section \ref{SectionRS}, we discuss the optimality of using rate splitting methods when dealing with cooperating BC, and in Section \ref{SubSectionBinarySymetric} we give an example of a cooperative BC in which rate splitting is suboptimal. Finally, proofs are given in Section \ref{SectionProofs}.

                    %%%%%%%%%%%%%%%%%%%%%%%%%%%%%%%%%%%%%%%%%%%%%%%%%%%%%%%%%%%%%%%%%%%%%%%%%%%%%%%%%%%%%%%%%%%%%%%%%%%%%%%%%%%%%%%%%%%%%%%%%%%%%%%%
                    %%%%%%%%%%%%%%%%%%%%%%%%%%%%%%%%%%%%%%%%%%%%%%%%%%%%%%%%%%%%%%%%%%%%%%%%%%%%%%%%%%%%%%%%%%%%%%%%%%%%%%%%%%%%%%%%%%%%%%%%%%%%%%%%
                    %%%%%%%%%%%%%%%%%%%%%%%%%%%%%%%%%%%%%%%%%%%                                     %%%%%%%%%%%%%%%%%%%%%%%%%%%%%%%%%%%%%%%%%%%%%%%%
                    %%%%%%%%%%%%%%%%%%%%%%%%%%%%%%%%%%%%%%%%%%%           Problem Definition        %%%%%%%%%%%%%%%%%%%%%%%%%%%%%%%%%%%%%%%%%%%%%%%%
                    %%%%%%%%%%%%%%%%%%%%%%%%%%%%%%%%%%%%%%%%%%%                                     %%%%%%%%%%%%%%%%%%%%%%%%%%%%%%%%%%%%%%%%%%%%%%%%
                    %%%%%%%%%%%%%%%%%%%%%%%%%%%%%%%%%%%%%%%%%%%%%%%%%%%%%%%%%%%%%%%%%%%%%%%%%%%%%%%%%%%%%%%%%%%%%%%%%%%%%%%%%%%%%%%%%%%%%%%%%%%%%%%%
                    %%%%%%%%%%%%%%%%%%%%%%%%%%%%%%%%%%%%%%%%%%%%%%%%%%%%%%%%%%%%%%%%%%%%%%%%%%%%%%%%%%%%%%%%%%%%%%%%%%%%%%%%%%%%%%%%%%%%%%%%%%%%%%%%

\section{Notation and Problem Definition}\label{SectionNotation}

\begin{figure}[h!]
    \begin{center}
        \begin{psfrags}
            \psfragscanon
            \psfrag{B}[][][0.9]{$A$}
            \psfrag{D}[][][0.9]{$B$}
            \psfrag{C}[][][0.9]{$S^{i}$}
            \psfrag{X}[][][0.9]{ $S^n$}
            \psfrag{I}[][][0.9]{\ \ \ \ $M_Z$}
            \psfrag{Z}[][][0.9]{\ \ \ \ $M_Y$}
            \psfrag{J}[][][0.9]{\ \ \ \ \ \ \ \ \ \  Encoder}
            \psfrag{K}[][][0.9]{\ \  $X^n$}
            \psfrag{S}[][][0.9]{\ \ \ \ \ \ \ \ \ \ \ \ \ \ \ \ \ \ \  $P_{Y,Z|X,S}=$}
            \psfrag{L}[][][0.9]{\ \ \ \ \ \ \ \ \ \ \ \ \ \  \ \ \ \ \ $P_{Y|X,S}P_{Z|Y}$}
            \psfrag{M}[][][0.9]{\ \ \ \ \ \ \  $Y^n$}
            \psfrag{N}[][][0.9]{\ \ \ \ \ \ \  $Z^n$}
            \psfrag{O}[][][0.9]{\ \ \ \ \ \ \ \ \ \ \ \  Decoder Y}
            \psfrag{P}[][][0.9]{\ \ \ \ \ \ \ \ \ \ \ \  Decoder Z}
            \psfrag{A}[][][0.9]{\ \ $C_{12}$}
            \psfrag{Q}[][][0.9]{$\hat{M}_Y$}
            \psfrag{R}[][][0.9]{$\hat{M}_Z$}
            \centerline{\includegraphics[scale = .6]{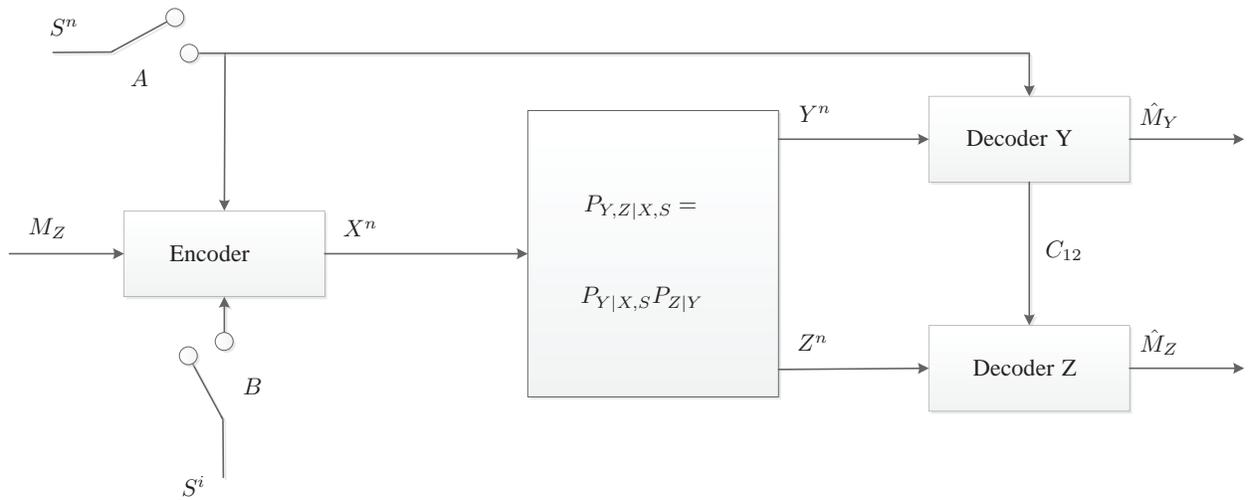}}
            \caption{The physically degraded, state-dependent BC with cooperating decoders. When considering cooperation between decoders in a physically degrade BC setting, only a cooperation link from the strong decoder to the weak decoder will contribute to increasing the rate.} \label{channel1}
            \psfragscanoff
        \end{psfrags}
     \end{center}
 \end{figure}
\par In this paper, random variables are denoted with upper case letters, deterministic realizations or specific values are denoted with lower case letters, and calligraphic letters denote the alphabets of the random variables. Vectors of $n$ elements, $(x_1,x_2,...,x_n)$ are denoted as $x^n$, and $x_i^j$ denotes the $i-j+1$-tuple $(x_i,x_{i+1},...,x_j)$ when $j\geq i$ and an empty set otherwise. The probability distribution function of $X$, the joint distribution function of $X$ and $Y$, and the conditional distribution of $X$ given $Y$ are denoted by $P_{X}$, $P_{X,Y}$ and $P_{X|Y}$, respectively.

\par A PDSD BC, $(\mathcal{S},P_S(s),\mathcal{X},\mathcal{Y},\mathcal{Z},P_{Y,Z|X,S}(y,z|x,s))$, illustrated in Fig. \ref{channel1}, is a channel with input alphabet $\mathcal{X}$, output alphabet $\mathcal{Y}\times\mathcal{Z}$ and a state space $\mathcal{S}$. The encoder selects a channel input sequence, $X^n = X^n(M_Z,M_Y,S^n)$. The outputs of the channel at Decoder Y and Decoder Z are denoted $Y^n$ and $Z^n$, respectively. The channel is assumed memoryless and without feedback, thus probabilities on n-vectors are given by:

\begin{equation}
P_{Y,Z|X,S}(y^n,z^n|x^n,s^n) =\prod_{i=1}^n P_{Y,Z|X,S}(y_i,z_i|x_i,s_i).
\end{equation}
In addition, the channel probability function can be decomposed as $P_{Y,Z|X,S}(y_i,z_i|x_{i},s_i)=P_{Y|X,S}(y_i|x_{i},s_i)P_{Z|Y}(z_i|y_i)$, i.e, $(X,S)-Y-Z$ form a Markov chain.
Due to the Markov property of physically degraded BCs, only a cooperation link from Decoder Y to the Decoder Z will contribute to increasing the rate.
%Due to the Markov property, we can see that when considering cooperation between the decoders in a physically degraded BC setting, only a cooperation link from the Decoder Y to the Decoder Z will contribute to increasing the rate.

\begin{definition}\label{PDSDCode1}
A $((2^{nR_Z},2^{nR_Y}),2^{nC_{12}},n)$ code for the PDSD BC with noncausal side information available at the encoder and strong decoder (where switch $A$ is closed in Fig. \ref{channel1}) consists of two sets of integers, $\mathcal{M}_Z = \{1,2,...,2^{nR_Z}\}$ and $\mathcal{M}_Y = \{1,2,...,2^{nR_Y}\}$, called message sets, an index set for the conference message $\mathcal{M}_{12} = \{1,2,...,2^{nC_{12}}\}$, an encoding function
\begin{equation}
f:\mathcal{M}_Z\times\mathcal{M}_Y\times\mathcal{S}^n\rightarrow \mathcal{X}^n\label{nocausalEncoder},
\end{equation}
a conference mapping
\begin{equation}
h_{12}:\mathcal{Y}^n\times\mathcal{S}^n\rightarrow \mathcal{M}_{12},
\end{equation}
and two decoding functions
\begin{eqnarray}
g_y&:&\mathcal{Y}^n\times\mathcal{S}^n\rightarrow \hat{\mathcal{M}_Y}\label{nocausalDecoder}\\
g_z&:&\mathcal{Z}^n\times\mathcal{M}_{12}\rightarrow \hat{\mathcal{M}_Z}.
\end{eqnarray}
\end{definition}

\begin{definition}\label{PDSDCodeCausal}
The definition of an $((2^{nR_Z},2^{nR_Y}),2^{nC_{12}},n)$ code for the PDSD BC with causal side information and noninformed decoders (where switch $A$ is open and switch $B$ is closed in Fig. \ref{channel1}) follows Definition \ref{PDSDCode1} above, except that the encoder (\ref{nocausalEncoder}) is replaced by a sequence of encoders:
\begin{equation}
f_i:\mathcal{M}_Z\times\mathcal{M}_Y\times\mathcal{S}^i\rightarrow \mathcal{X}^n, \ \ \  i=1,2,\ldots, n
\end{equation}
and Decoder Y's decoding function (\ref{nocausalDecoder}) is replaced by
\begin{eqnarray}
g_y&:&\mathcal{Y}^n\rightarrow \hat{\mathcal{M}_Y}.
\end{eqnarray}
\end{definition}

\par In Definition \ref{PDSDCode1} there is no restriction on when the link $C_{12}$ can be used. However, if we restrict the link to be used before the sequence $Y^n$ is given to the strong decoder, then only information on the state sequence, $S^n$, can be sent there. This is the subject of the next definition.
\begin{definition}\label{PDSDCode3}
A $((2^{nR_Z},2^{nR_Y}),2^{nC_{12}},n)$ code for the PDSD BC with rate-limited side information at the weak decoder, illustrated in Fig. \ref{channel2},  consists of three sets of integers, $\mathcal{M}_Z = \{1,2,...,2^{nR_Z}\}$ and $\mathcal{M}_Y = \{1,2,...,2^{nR_Y}\}$, called message sets, an index set $\mathcal{M}_{12} = \{1,2,...,2^{nC_{12}}\}$, a channel encoding function
\begin{equation}
f:\mathcal{M}_Z\times\mathcal{M}_Y\times\mathcal{S}^n\rightarrow \mathcal{X}^n,
\end{equation}
a state encoding function
\begin{equation}
h_s:\mathcal{S}^n\rightarrow \mathcal{M}_{12},
\end{equation}
and two decoding functions
\begin{eqnarray}
g_y&:&\mathcal{Y}^n\times\mathcal{S}^n\rightarrow \hat{\mathcal{M}_Y}\\
g_z&:&\mathcal{Z}^n\times\mathcal{M}_{12}\rightarrow \hat{\mathcal{M}_Z}.
\end{eqnarray}
\end{definition}

The next definitions, which deal with probability of error, achievable rates, and capacity region, hold for all the three problems defined in definitions \ref{PDSDCode1} to \ref{PDSDCode3} above, with respect to the corresponding code definitions.
\begin{definition}
We define the average probability of error for a $((2^{nR_Z},2^{nR_Y}),2^{nC_{12}},n)$ code as follows:
\begin{equation}
P_e^{(n)} = Pr\big{(}(\hat{M}_Y\neq M_Y)\cup (\hat{M}_Z\neq M_Z)\big{)}.
\end{equation}
The average probability of error at each receiver is defined as
\begin{eqnarray}
P_{e,y}^{(n)} &=& Pr(\hat{M}_Y\neq M_Y)\\
P_{e,z}^{(n)} &=& Pr(\hat{M}_Z\neq M_Z).
\end{eqnarray}
\end{definition}
As is commonly held when discussing BCs, the average probability $P_e^{(n)}$ tends to zero as $n$ approaches infinity, iff $P_{e,y}^{(n)}$ and $P_{e,z}^{(n)}$ both tend to zero as $n$ approaches infinity.
\begin{definition}
A rate triplet $(R_Z,R_Y,C_{12})$ is achievable if there exists a sequence of codes $((2^{nR_Z},2^{nR_Y}),2^{nC_{12}},n)$ such that $P^{(n)}_e \rightarrow 0$ as $n\rightarrow \infty$.
\end{definition}

\begin{definition}
The capacity region is the closure of all achievable rates.
\end{definition}

\begin{figure}[h!]
    \begin{center}
        \begin{psfrags}
            \psfragscanon
            \psfrag{X}[][][0.9]{$S^n$}
            \psfrag{I}[][][0.9]{\ \ \ \ $M_Z$}
            \psfrag{Z}[][][0.9]{\ \ \ \ $M_Y$}
            \psfrag{J}[][][0.9]{\ \ \ \ \ \ \ \ \  Encoder}
            \psfrag{K}[][][0.9]{\ \ \ $X^n$}
            \psfrag{S}[][][0.9]{\ \ \ \ \ \ \ \ \ \ \ \ \ \ \ \ \ \ \  $P_{Y,Z|X,S}=$}
            \psfrag{L}[][][0.9]{\ \ \ \ \ \ \ \ \ \ \ \ \ \ \ \ \ \ \ \ $P_{Y|X,S}P_{Z|Y}$}
            \psfrag{M}[][][0.9]{\ \ \ \  $Y^n$}
            \psfrag{N}[][][0.9]{\ \ \ \ $Z^n$}
            \psfrag{O}[][][0.9]{\ \ \ \ \ \ \ \ \  Decoder Y}
            \psfrag{P}[][][0.9]{\ \ \ \ \ \ \ \ \  Decoder Z}
            \psfrag{A}[][][0.9]{\ \ \ \ \ \ Rate $\leq C_{12}$}
            \psfrag{Q}[][][0.9]{\ \  $\hat{M}_Y$}
            \psfrag{R}[][][0.9]{\ \  $\hat{M}_Z$}
            \psfrag{B}[][][0.9]{\ \ \ \ \ \ State}
            \psfrag{T}[][][0.9]{\ \ \ \ \ \ \ \ Encoder}
            \psfrag{C}[][][0.9]{ $S^n$}
            \centerline{\includegraphics[scale = .6]{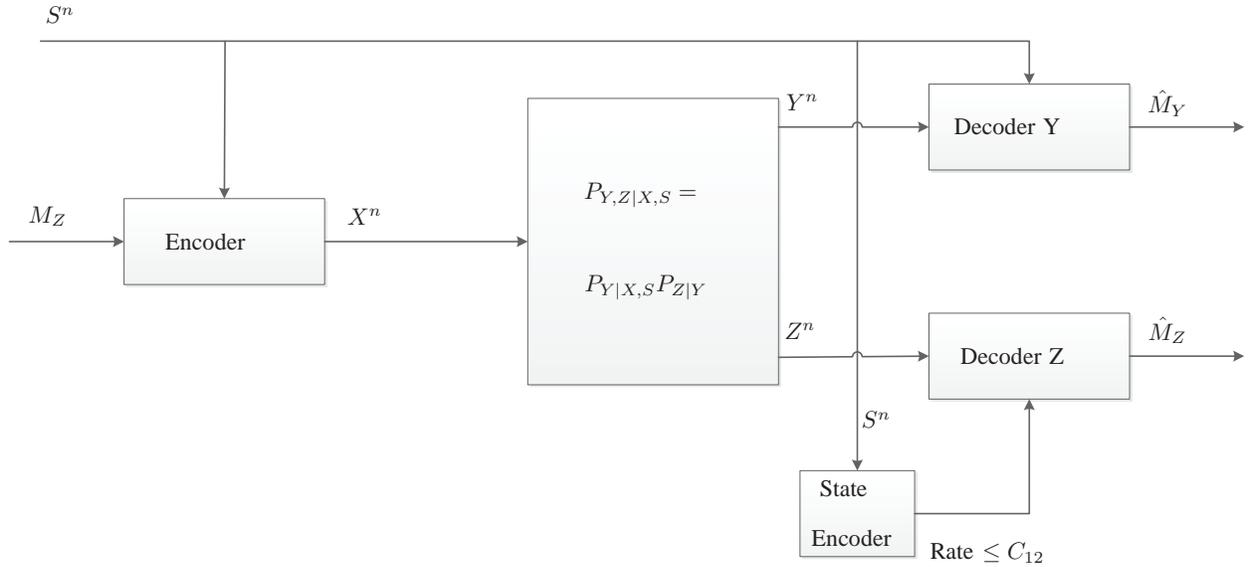}}
            \caption{The physically degraded, state-dependent BC with full state information at the encoder and one decoder together with rate-limited state information at the other decoder. This model describes the case where the cooperation between the decoders is confined such that it takes place prior to the decoding of the messages. Therefore, the only information that the strong decoder, Decoder Y, can send to the weaker decoder, Decoder Z, is regarding the state sequence. However, the state is only partially available at Decoder Z, since it is sent rate-limited due to the limited capacity of the link between the decoders.} \label{channel2}
            \psfragscanoff
        \end{psfrags}
     \end{center}
     \vspace{-4mm}
 \end{figure}

                    %%%%%%%%%%%%%%%%%%%%%%%%%%%%%%%%%%%%%%%%%%%%%%%%%%%%%%%%%%%%%%%%%%%%%%%%%%%%%%%%%%%%%%%%%%%%%%%%%%%%%%%%%%%%%%%%%%%%%%%%%%%%%%%%
                    %%%%%%%%%%%%%%%%%%%%%%%%%%%%%%%%%%%%%%%%%%%%%%%%%%%%%%%%%%%%%%%%%%%%%%%%%%%%%%%%%%%%%%%%%%%%%%%%%%%%%%%%%%%%%%%%%%%%%%%%%%%%%%%%
                    %%%%%%%%%%%%%%%%%%%%%%%%%%%%%%%%%%%%%%%%%%%                                     %%%%%%%%%%%%%%%%%%%%%%%%%%%%%%%%%%%%%%%%%%%%%%%%
                    %%%%%%%%%%%%%%%%%%%%%%%%%%%%%%%%%%%%%%%%%%%            Capacity Region          %%%%%%%%%%%%%%%%%%%%%%%%%%%%%%%%%%%%%%%%%%%%%%%%
                    %%%%%%%%%%%%%%%%%%%%%%%%%%%%%%%%%%%%%%%%%%%                                     %%%%%%%%%%%%%%%%%%%%%%%%%%%%%%%%%%%%%%%%%%%%%%%%
                    %%%%%%%%%%%%%%%%%%%%%%%%%%%%%%%%%%%%%%%%%%%%%%%%%%%%%%%%%%%%%%%%%%%%%%%%%%%%%%%%%%%%%%%%%%%%%%%%%%%%%%%%%%%%%%%%%%%%%%%%%%%%%%%%
                    %%%%%%%%%%%%%%%%%%%%%%%%%%%%%%%%%%%%%%%%%%%%%%%%%%%%%%%%%%%%%%%%%%%%%%%%%%%%%%%%%%%%%%%%%%%%%%%%%%%%%%%%%%%%%%%%%%%%%%%%%%%%%%%%

\section{Main Results and Insights}\label{SectionMainResults}
\subsection{Capacity Region of the PDSD BC with Cooperating Decoders}\label{SectionCapacity1}
\par We begin by stating the capacity region for the PDSD BC illustrated in Fig. \ref{channel1} (switch $A$ is closed) in the following theorem.
\begin{theorem} \label{Theorem1}
The capacity region of the PDSD BC, $(X,S)-Y-Z$, with noncausal state information known at the encoder and at Decoder Y, with cooperating decoders, is the closure of the set that contains all the rates $(R_Z,R_Y)$ that satisfy
\begin{subequations}\label{CapacityRegion1}
\begin{eqnarray}
R_Z&\leq& I(U;Z)-I(U;S)+C_{12}\label{CapacityRegion1.1}\\
R_Y&\leq& I(X;Y|U,S)\label{CapacityRegion1.2}\\
R_Z+R_Y&\leq& I(X;Y|S)\label{CapacityRegion1.3},
\end{eqnarray}
for some joint probability distribution of the form
\begin{equation}
P_{S,U,X,Z,Y} = P_{S}P_{U|S}P_{X|S,U}P_{Y|X,S}P_{Z|Y}.\label{distribution1}
\end{equation}
\end{subequations}
\end{theorem}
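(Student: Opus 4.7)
The plan is to prove Theorem~\ref{Theorem1} by establishing the direct and converse parts separately; the novel ingredient is a multiple-binning scheme on $M_Z$ that exploits the cooperation link $C_{12}$.

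For achievability, I would fix a joint distribution of the form (\ref{distribution1}). For each $m_z\in\{1,\ldots,2^{nR_Z}\}$, generate a sub-codebook of $2^{nR'}$ auxiliary sequences $u^n(m_z,j)$ drawn i.i.d.\ $\sim P_U$, with $R'$ slightly above $I(U;S)$ so that Gel'fand--Pinsker matching of $u^n$ to $s^n$ succeeds. For each $(u^n,s^n)$ generate $2^{nR_Y}$ sequences $x^n(u^n,s^n,m_y)$ conditionally i.i.d.\ $\sim P_{X|U,S}$. Further, partition the message set $\{1,\ldots,2^{nR_Z}\}$ into $2^{nC_{12}}$ cooperation bins. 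The encoder, on input $(m_z,m_y,s^n)$, selects the smallest $j$ with $(u^n(m_z,j),s^n)$ jointly typical and transmits $x^n(u^n,s^n,m_y)$. Decoder~Y, which knows $(y^n,s^n)$, decodes $(m_z,j)$ by joint typicality, yielding $R_Z\le I(U;Y|S)$, and then decodes $m_y$ from $(u^n,s^n,y^n)$, yielding $R_Y\le I(X;Y|U,S)$. The strong decoder next transmits over the $C_{12}$ link the cooperation bin index $b$ of $m_z$, which fits in exactly $nC_{12}$ bits. Decoder~Z, on receiving $(z^n,b)$, searches for the unique $\tilde m_z$ in bin $b$ admitting some $\tilde j$ with $(u^n(\tilde m_z,\tilde j),z^n)$ jointly typical; counting over the $2^{n(R_Z-C_{12}+R')}$ wrong candidates, each typical with $z^n$ with probability $\approx 2^{-nI(U;Z)}$, yields $R_Z\le I(U;Z)-I(U;S)+C_{12}$. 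The sum-rate bound (\ref{CapacityRegion1.3}) is implied by combining $R_Z\le I(U;Y|S)$ with $R_Y\le I(X;Y|U,S)$, and the full region (\ref{CapacityRegion1}) is recovered after taking the union over admissible distributions and a standard time-sharing step.

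For the converse, denote by $T_{12}=h_{12}(Y^n,S^n)$ the cooperation message, and apply Fano's inequality to get $H(M_Y|Y^n,S^n)\le n\epsilon_n$ and $H(M_Z|Z^n,T_{12})\le n\epsilon_n$. The $R_Z$ bound starts from
\[
nR_Z \le I(M_Z;Z^n,T_{12})+n\epsilon_n \le I(M_Z;Z^n)+H(T_{12})+n\epsilon_n \le I(M_Z;Z^n)-I(M_Z;S^n)+nC_{12}+n\epsilon_n,
\]
where I used $M_Z\perp S^n$ and $H(T_{12})\le nC_{12}$. Csisz\'ar's sum identity with the auxiliary $U_i=(M_Z,Z^{i-1},S_{i+1}^n)$ single-letterizes the difference into $\sum_i[I(U_i;Z_i)-I(U_i;S_i)]$, matching (\ref{CapacityRegion1.1}). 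For $R_Y$, using $M_Y\perp(M_Z,S^n)$ one writes $nR_Y\le I(M_Y;Y^n|M_Z,S^n)+n\epsilon_n$; the bound $H(Y_i|M_Z,S^n,Y^{i-1})\le H(Y_i|U_i,S_i)$ follows from physical degradation, which gives $I(Z^{i-1};Y_i|M_Z,S^n)\le I(Y^{i-1};Y_i|M_Z,S^n)$, and combined with the memoryless channel identity $H(Y_i|X_i,S_i)=H(Y_i|X_i,U_i,S_i)$ produces $\sum_i I(X_i;Y_i|U_i,S_i)$. For the sum rate, the fact that $T_{12}$ is a function of $(Y^n,S^n)$ together with physical degradation yields $H(M_Z|Y^n,S^n)\le H(M_Z|Z^n,T_{12})\le n\epsilon_n$, so that $n(R_Y+R_Z)\le I(M_Y,M_Z;Y^n|S^n)+2n\epsilon_n\le\sum_i I(X_i;Y_i|S_i)+2n\epsilon_n$. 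A standard time-sharing variable $Q$ then produces single-letter variables $(U,X,S,Y,Z)$ satisfying (\ref{distribution1}).

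The main obstacle, in my view, is the single-letterization of (\ref{CapacityRegion1.1}): one must exhibit an auxiliary $U_i$ that simultaneously (i)~allows $I(M_Z;Z^n)-I(M_Z;S^n)$ to be written as $\sum_i[I(U_i;Z_i)-I(U_i;S_i)]$ via Csisz\'ar--K\"orner, and (ii)~respects the Markov structure $U-(X,S)-Y-Z$ imposed by (\ref{distribution1}). The choice $U_i=(M_Z,Z^{i-1},S_{i+1}^n)$ meets both requirements thanks to physical degradation, but reconciling the ``$Y^{i-1}$ vs.\ $Z^{i-1}$'' discrepancy in the $R_Y$ argument, and correctly accounting for the $nC_{12}$ surplus contributed by the arbitrary function $T_{12}$, requires careful bookkeeping.
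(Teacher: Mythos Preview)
Your converse is essentially correct and parallels the paper's, only with the auxiliary $U_i=(M_Z,Z^{i-1},S_{i+1}^n)$ in place of the paper's $U_i=(M_Z,Y^{i-1},S_{i+1}^n)$; both choices work once physical degradedness is invoked, and your bookkeeping for the $R_Y$ bound via $H(Y_i|M_Z,S^n,Y^{i-1})\le H(Y_i|M_Z,S^n,Z^{i-1})\le H(Y_i|U_i,S_i)$ is sound.

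The achievability, however, has a genuine gap at Decoder~Y. You let Decoder~Y decode successively: first $(m_z,j)$ from $(u^n,s^n,y^n)$, then $m_y$ from $(x^n,u^n,s^n,y^n)$. The first step forces the additional constraint $R_Z\le I(U;Y|S)$, so for a fixed $(U,X)$ your scheme achieves only the rectangle
\[
R_Z\le \min\{I(U;Y|S),\,I(U;Z)-I(U;S)+C_{12}\},\qquad R_Y\le I(X;Y|U,S),
\]
which is in general strictly smaller than the pentagon in~(\ref{CapacityRegion1}). Your sentence ``the full region~(\ref{CapacityRegion1}) is recovered after taking the union over admissible distributions'' asserts exactly the nontrivial equivalence that the paper proves only in the \emph{stateless} case (Corollary~\ref{cor: eq.regions}); extending that mixing argument to the state-dependent setting requires controlling $I(U^*;Z)-I(U^*;S)$ under the time-sharing $U^*\in\{U,X\}$, and this difference need not be monotone in the mixing parameter when $I(X;Z|U)<I(X;S|U)$. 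So as written you have not achieved~(\ref{CapacityRegion1}).

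The paper avoids this issue by having Decoder~Y decode $(m_z,j,m_y,k)$ \emph{jointly}: it searches for a pair $(\hat m_z,\hat m_y)$ such that $(u^n(\hat m_z,\hat j),x^n(\hat m_z,\hat j,\hat m_y,\hat k),s^n,y^n)$ is typical. The dominant error event (wrong $\hat m_z$ and wrong $\hat m_y$) then yields directly $R_Z+\tilde R_Z+R_Y+\tilde R_Y<I(U,X;Y,S)$, i.e.\ $R_Z+R_Y<I(X;Y|S)$, with no separate constraint $R_Z\le I(U;Y|S)$. If you keep your code construction (with $x^n\sim P_{X|U,S}$ rather than the paper's $P_{X|U}$ plus matching) but switch to joint decoding at Decoder~Y, the same sum-rate bound follows from a packing argument with error exponent $I(U;S,Y)+I(X;Y|U,S)$, and the gap closes.
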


\par The proof is given in Sections \ref{Achievability1} and \ref{Converse}. The achievability of Theorem \ref{Theorem1} is proved by using techniques that include triple-binning and superposition coding. The main idea is to identify how to best use the capacity link between the decoders. In particular, we want maximize the potential use of the capacity link while simultaneously successfully balancing the allocation of rate resources between the two messages. Changes in the allocation of resources between the messages $M_Y$ and $M_Z$ are possible due to the fact that we use superposition coding. Using this coding method, Decoder Y, which is the strong decoder, also decodes the message $M_Z$ intended for Decoder Z. This allows us to shift resources between the messages and thus increase the rate resources of $M_Z$ at the expense of the message $M_Y$. Decoder Y can then send information about $M_Z$ to Decoder Z by using the capacity link between them. Therefore, the optimal coding scheme balances the distribution of rate resources between the messages, taking into account that additional information can be sent through the capacity link.

\par The additional information sent from Decoder Y to Decoder Z comes into play via the use of binning, that is, we divide the messages $M_Z$ among superbins in ascending order. Next, we use a Gel'fand-Pinsker code for each superbin. Now, we can redirect some of the rate resources of the message $M_Y$ to send the superbin index that contains $M_Z$. Decoder Y, which decodes both messages, sends this superbin index to Decoder Z through the capacity link between the decoders. Decoder Z then searches for $M_Z$ only in that superbin, by using joint typicality methods. By utilizing the capacity link through adding the superbining measure, we can increase the rate of $M_ Z$ achieved using the standard Gel'fand-Pinsker coding scheme.

\par Nevertheless, if the capacity link between the decoders is very large, there is still a restriction on the amount of information we can send through it. This restriction is reflected through the bound on the rate sum, $R_Z+R_Y\leq I(X;Y|S)$. This bound indicates that we cannot send more information about $(M_Y,M_Z)$ through this setting compared to the information we could have sent about $(M_Y,M_Z)$ through a state dependent point-to-point channel, where the state information is known at both the encoder and decoder. Moreover, note that
\begin{equation}
R_Z+R_Y\leq I(X;Y|S)=I(U,X;Y|S)=I(U;Y|S)+I(X;Y|U,S),
\end{equation}
that is, if we have a large capacity link between the decoders, we have a tradeoff between sending information about $M_Z$ and $M_Y$. If we choose to send information about $M_Y$ at the maximal rate possible, $I(X;Y|U,S)$, then the maximal rate we can send $M_Z$ is $I(U;Y|S)$. In contrast, we can increase the rate of $M_Z$ (up to the minimum between $I(U;Z)-I(U;S)+C_{12}$ or $I(X;Y |S)$) at the expense of reducing the rate of $M_Y$. For example, if we have an infinite capacity link between the decoders, we can consolidate the two decoders into one decoder which will to decode both messages.

%\par One approach uses tools such as rate splitting, superposition coding and Fourier-Motzkin elimination. The idea is to split the message intended for Decoder Z into two messages, $M_Z=(M_{Z_1},M_{Z_2})$. Then, we define new messages $\tilde{M}_Z=M_{Z_1}$ and $\tilde{M}_Y=(M_Y,M_{Z_2})$ and send the messages via a superposition coding scheme. In addition, the message $M_{Z_2}$ is sent to Decoder Z from Decoder Y through the capacity link between them, $C_{12}$.

\par Another interesting insight is revealed by comparing a special case of the result presented in Theorem \ref{Theorem1} with the result presented in \cite{R.Dabora_S.ServettoRS} which is referred to as the physically degraded BC with cooperating receivers (i.e., the BC model where the channel is not state dependent).
\par Let us consider the special case of Theorem \ref{Theorem1} where $S=\emptyset$. As a result, the region (\ref{CapacityRegion1}) reduces to
\begin{subequations}\label{noState}
\begin{eqnarray}
R_Z&\leq& I(U;Z)+C_{12}\label{noState1.1}\\
R_Y&\leq& I(X;Y|U)\label{noState1.2}\\
R_Z+R_Y&\leq& I(X;Y)\label{noState1.3},
\end{eqnarray}
\end{subequations}
for some
\begin{equation}
P_{U,X,Z,Y} =P_{X,U}P_{Y|X}P_{Z|Y}.\nonumber
\end{equation}
This special case was studied in \cite{R.Dabora_S.ServettoRS}, where a different expression for the capacity region was found:
\begin{subequations}\label{DaboraRegion}
\begin{eqnarray}
R_Z&\leq& \min\{I(U;Z)+C_{12} \ , \ I(U;Y)\}\label{DaboraRegion1.1}\\
R_Y&\leq& I(X;Y|U)\label{DaboraRegion1.2}
\end{eqnarray}
\end{subequations}
for some
\begin{equation}
P_{U,X,Z,Y} =P_{X,U}P_{Y|X}P_{Z|Y}.\nonumber
\end{equation}

Since the two regions, (\ref{noState}) and (\ref{DaboraRegion}), are shown to be the capacity regions of the same setting, it indicates that the two regions should be equivalent. It is simple to show that (\ref{DaboraRegion})$\subseteq$(\ref{noState}). However, the reverse inclusion is not so straightforward. This raises the question: are the regions indeed equal? The answer is given in the following corollary.
% We need to show that for every rate pair $(R_Z,R_Y)$ in (\ref{noState}) there exists a $U$ such that $(R_Z,R_Y)$  satisfy (\ref{DaboraRegion}).
\begin{corollary}\label{cor: eq.regions}
The two regions, (\ref{noState}) and (\ref{DaboraRegion}), are equivalent.
\end{corollary}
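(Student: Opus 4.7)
My plan is to prove the two inclusions separately. The forward direction (\ref{DaboraRegion}) $\subseteq$ (\ref{noState}) is essentially an algebraic check and I would dispatch it first: for any admissible $U$, the bounds on $R_Z$ and $R_Y$ in (\ref{DaboraRegion}) already appear in (\ref{noState}), and the sum bound drops out by adding $R_Z \leq I(U;Y)$ and $R_Y \leq I(X;Y|U)$ and using the Markov chain $U - X - Y$ to collapse $I(U;Y) + I(X;Y|U)$ into $I(X;Y)$.

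The reverse inclusion is the substantive one. The obstacle is that (\ref{noState}) lacks an explicit $R_Z \leq I(U;Y)$ bound, so at rate points saturating the sum constraint (\ref{noState1.3}) the original $U$ may well violate (\ref{DaboraRegion1.1}). I would repair this by constructing a tilted auxiliary: introduce a $\mathrm{Bernoulli}(\lambda)$ time-sharing variable $Q$, independent of $(U,X,Y,Z)$, and define $U' := (Q, U_Q)$ with $U_0 := U$ and $U_1 := X$. The chain $U' - X - Y$ still holds conditional on $Q$, so $U'$ remains admissible in the factorization of (\ref{DaboraRegion}).

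The key step is tuning $\lambda$. A direct calculation gives $I(X;Y|U') = (1-\lambda) I(X;Y|U)$, so I would set $\lambda := 1 - R_Y/I(X;Y|U)$, which lies in $[0,1]$ by (\ref{noState1.2}); the degenerate case $I(X;Y|U)=0$ forces $R_Y=0$ and is handled by taking $U' := X$ outright. Using the identity $I(X;Y) = I(U;Y) + I(X;Y|U)$, this choice collapses
\begin{equation}
I(U';Y) = \lambda I(X;Y) + (1-\lambda) I(U;Y) = I(X;Y) - R_Y \geq R_Z,
\end{equation}
where the last inequality is precisely the sum bound (\ref{noState1.3}), and $I(X;Y|U') = R_Y$ by construction, so (\ref{DaboraRegion1.2}) is met.

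The cooperation-link bound falls out from data processing. The Markov chain $U - X - Y$ combined with the degradedness $P_{Z|Y}$ yields $U - X - (Y,Z)$, hence $U - X - Z$ and $I(U;Z) \leq I(X;Z)$. Consequently,
\begin{equation}
I(U';Z) = \lambda I(X;Z) + (1-\lambda) I(U;Z) \geq I(U;Z),
\end{equation}
so $I(U';Z) + C_{12} \geq R_Z$ by (\ref{noState1.1}), completing the verification of (\ref{DaboraRegion1.1}). The main obstacle I expect is simply identifying the right parametric family of auxiliaries; once it is in place, the conceptual content is that the sum constraint in (\ref{noState}) offers exactly the slack needed to rotate $U$ toward $X$ far enough to push $I(U';Y)$ up to $R_Z$, while data processing guarantees the cooperation-augmented bound cannot deteriorate along that rotation.
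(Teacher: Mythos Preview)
Your proposal is correct and follows essentially the same route as the paper: build a time-shared auxiliary interpolating between $U$ and $X$, tune the mixing parameter so that $R_Y = I(X;Y|U')$ exactly, then read off $R_Z \leq I(U';Y)$ from the sum bound and $I(U';Z) \geq I(U;Z)$ from monotonicity. The only differences are cosmetic---your $\lambda$ is the complement of the paper's, you carry $Q$ explicitly in $U'$, and you invoke data processing for the $Z$-bound where the paper writes out the decomposition $I(U^*;Z) = I(U;Z) + (1-\lambda)I(X;Z|U)$---but the argument is the same.
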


 A direct proof, which is not based on the fact that both regions characterize the same capacity, is given in Section \ref{proof:cor:eq_regions}. The main idea is to find a specific choice of $U$, for any rate pair $(R_Z,R_Y)$ in (\ref{noState}), such that $(R_Z,R_Y)$ satisfy (\ref{DaboraRegion}), implying that
(\ref{noState})$\subseteq$(\ref{DaboraRegion}). Hence, we conclude that (\ref{noState}) and (\ref{DaboraRegion}) are equivalent.

%\par Since the two regions, (\ref{noState}) and (\ref{DaboraRegion}), are shown to be the capacity regions of the same setting, we can infer that the two regions are equivalent. A more detailed proof is given in Section \ref{proof:cor:eq_regions}. %It is easy to see that (\ref{noState}) contains (\ref{DaboraRegion}), since every pair $(R_Z,R_Y)$ from (\ref{DaboraRegion}) has to also satisfy (\ref{noState}). The other side of the equality, proving (\ref{DaboraRegion}) contains (\ref{noState}), is not straightforward. We need to show that for every pair $(R_Z,R_Y)$ in (\ref{noState}) there exists a $U$ such that $(R_Z,R_Y)$  satisfy (\ref{DaboraRegion}). However, since the two regions have been proven to be the capacity regions for the same setting, we can conclude that they are, indeed, equivalent.

\subsection{Causal Side Information}\label{SectionCapacity2}
 \par Consider the case where the state is known to the encoder in a causal manner, i.e., at each time index, $i$, the encoder has access to the sequence $s^i$. This setting is illustrated in Fig. \ref{channel1}, where we take switch $A$ to be open and switch $B$ to be closed. In this scenario, the encoder is the only user with information regarding the state sequence, in contrast to the noncausal case, where the strong decoder also has access to the channel states. The capacity region for this setting is characterized in the following theorem.
 \begin{theorem}\label{Theorem2}
The capacity region for the PDSD BC with cooperating decoders and causal side information known at the encoder is the closure of the set that contains all the rates $(R_Z,R_Y)$  that satisfy
\begin{subequations}\label{CapacityRegion2}
\begin{eqnarray}
R_Z&\leq& I(U;Z)+C_{12}\label{CapacityRegion2.1}\\
R_Y&\leq& I(V;Y|U)\label{CapacityRegion2.2}\\
R_Z+R_Y&\leq& I(V,U;Y)\label{CapacityRegion2.3},
\end{eqnarray}
for some joint probability distribution of the form
\begin{eqnarray}
P_{S,U,V,X,Z,Y} &=& P_{S}P_{U,V}P_{X|S,U,V}P_{Y|X,S}P_{Z|Y}.\label{distribution2}
\end{eqnarray}
\end{subequations}
\end{theorem}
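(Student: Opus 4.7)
The proof mirrors that of Theorem~1 but replaces Gel'fand--Pinsker coding with Shannon strategies and accounts for the fact that Decoder~Y no longer observes $S^n$. Throughout, fix a joint distribution $P_{U,V}$ (independent of $S$) together with a (possibly stochastic) kernel $P_{X|U,V,S}$.

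For the direct part, I would use a two-layer superposition codebook together with the multiple-binning idea already developed in the achievability of Theorem~1. Generate $2^{nR_Z}$ cloud centers $U^n(m_z)$ i.i.d.\ according to $\prod_i P_U$, and for each $m_z$ generate $2^{nR_Y}$ satellite codewords $V^n(m_z,m_y)$ i.i.d.\ according to $\prod_i P_{V|U}$. Partition $\mathcal{M}_Z$ into $2^{nC_{12}}$ equal-sized superbins. To transmit $(m_z,m_y)$, at time $i$ the encoder, having just observed $s_i$, draws $X_i\sim P_{X|U,V,S}(\cdot\mid U_i(m_z),V_i(m_z,m_y),s_i)$, exploiting only the current state letter. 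Decoder~Y performs joint typicality decoding on $(U^n,V^n,Y^n)$; the two dominant error events produce exactly $R_Y\le I(V;Y|U)$ and $R_Z+R_Y\le I(U,V;Y)$. Decoder~Y then forwards the superbin index of $\hat m_z$ across the $C_{12}$-link, and Decoder~Z searches that superbin alone for the unique $\tilde m_z$ with $(U^n(\tilde m_z),Z^n)$ jointly typical, which succeeds provided $R_Z-C_{12}\le I(U;Z)$.

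For the converse, I would apply Fano's inequality and make the identification
\[
U_i=(M_Z,Y^{i-1}),\qquad V_i=(M_Y,M_Z,Y^{i-1}),
\]
followed by a time-sharing variable $Q$ uniform on $\{1,\dots,n\}$ that is absorbed into both auxiliaries. The bound $R_Y\le I(V;Y|U)$ then follows from single-letterizing $I(M_Y;Y^n\mid M_Z)$ and noting that $I(M_Y;Y_i\mid U_i)=I(V_i;Y_i\mid U_i)$. For the sum bound, physical degradedness together with Fano applied at Decoder~Z gives $H(M_Z\mid Y^n)\le n\epsilon_n$ (Decoder~Y has access to the richer output and can compute $T=h_{12}(Y^n)$ itself), so $n(R_Z+R_Y)\le I(M_Z,M_Y;Y^n)+n\epsilon_n\le\sum_i I(V_i;Y_i)+n\epsilon_n$. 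For $R_Z$, I would split off the cooperation message via $nR_Z\le I(M_Z;Z^n,T)+n\epsilon_n\le I(M_Z;Z^n)+nC_{12}+n\epsilon_n$ and then use memoryless degradedness, specifically the consequence $Z^{i-1}\perp Z_i\mid (M_Z,Y^{i-1})$, to conclude $I(M_Z;Z^n)\le\sum_i I(M_Z,Y^{i-1};Z_i)=\sum_i I(U_i;Z_i)$.

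The main obstacle I anticipate is verifying the joint-distribution form (\ref{distribution2}), in particular $(U,V)\perp S$. This reduces to the observation that, under causal feedback-free encoding, $Y^{i-1}$ depends on $S^n$ only through $S^{i-1}$ and the channel noise up to time $i-1$, which are independent of $S_i$; combined with a time-sharing $Q$ chosen independently of $(M_Y,M_Z,S^n)$, this yields $(U,V)\perp S$ in the single-letter region. The Markov chain $(U,V)-(X,S)-Y$ is immediate from the memoryless channel, while $P_{Z|Y}$ carries over directly from the physical-degradedness assumption.
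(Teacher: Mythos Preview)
Your proposal is correct and follows essentially the same route as the paper: superposition with Shannon strategies plus superbinning for the direct part, and the auxiliary identification $U_i=(M_Z,Y^{i-1})$ for the converse (the paper takes $V_i=M_Y$ rather than your $V_i=(M_Y,U_i)$, but since $U_i$ is already in the conditioning this yields identical single-letter expressions). Your discussion of the independence $(U_i,V_i)\perp S_i$ via causal encoding is exactly the point the paper invokes, stated there in one line.
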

The proof is given in Section \ref{SectionCausalProof}.

\subsection{Capacity Region of the PDSD BC with Rate-Limited State Information at the Weak Decoder}\label{SectionCapacity3}
In the previous two cases there was no restriction on when the cooperation link $C_{12}$ is to be used. In the following case, we consider a setting where the cooperation is restricted to being used before the outputs $Y^n$ are given to the strong decoder. Therefore, we can only use the cooperation link to send the weak decoder information about the state sequence, $s^n$, from the strong decoder. Moreover, since there is a limit on the information we can send through the link, the weak decoder receives rate-limited information about the channel states. Hence, we model this setting as a PDBC with noncausal state information at the encoder and strong decoder and rate-limited state information at the weak decoder. We state the capacity region for this setting in the following theorem
\begin{theorem} \label{Theorem3}
The capacity region of the PDSD BC, $(X,S)-Y-Z$, with rate-limited state information at the weak decoder, illustrated in Fig. \ref{channel2}, is the closure of the set that contains all the rates $(R_Z,R_Y)$ that satisfy
\begin{subequations}\label{CapacityRegion3}
\begin{eqnarray}
R_Z&\leq& I(U;Z,S_d)-I(U;S,S_d)\label{CapacityRegion3.1}\\
R_Y&\leq& I(X;Y|U,S,S_d)\label{CapacityRegion3.2}
\end{eqnarray}
for some joint probability distribution of the form
\begin{equation}
P_{S,U,X,Z,Y} = P_{S}P_{S_d,U,X|S}P_{Y|X,S}P_{Z|Y}\label{distribution3}
\end{equation}
such that
\begin{equation}
C_{12}\geq I(S;S_d)-I(Z;S_d).\label{constraint1}
\end{equation}
\end{subequations}
\end{theorem}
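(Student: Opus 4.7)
The plan is to prove Theorem \ref{Theorem3} by combining Wyner-Ziv source coding of $S^n$ (with $Z^n$ playing the role of decoder side information) with a Gel'fand-Pinsker channel code in which the encoder treats both $S$ and the description $S_d$ as known state. This is the natural split: the link in Fig.~\ref{channel2} is used before $Y^n$ is seen, so its only useful content is a lossy description of $S^n$ that Decoder Z will combine with its own channel output.

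For achievability I would build the code in two layers. First, generate a Wyner-Ziv codebook of $2^{n(I(S;S_d)+\epsilon)}$ sequences $s_d^n$ drawn i.i.d.\ from $P_{S_d}$ and partition them into $2^{nC_{12}}$ bins; on seeing $S^n$ the state encoder picks a jointly typical $s_d^n$ (covering lemma) and transmits its bin index, while the condition $C_{12}\ge I(S;S_d)-I(Z;S_d)$ ensures that Decoder Z, using $Z^n$ as side information, can recover $s_d^n$ with vanishing error (Wyner-Ziv decoding lemma). Second, for each $(s^n,s_d^n)$ build a Gel'fand-Pinsker codebook of $u^n$ sequences organized into $2^{nR_Z}$ bins of size $2^{n(I(U;S,S_d)+\epsilon)}$, locate a $u^n$ in bin $m_Z$ jointly typical with $(s^n,s_d^n)$, and transmit $x^n$ drawn from $\prod_i P_{X|U,S,S_d}$. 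Decoder Z, after reconstructing $s_d^n$, looks for the unique $m_Z$ whose bin contains a $u^n$ jointly typical with $(Z^n,s_d^n)$, yielding $R_Z\le I(U;Z,S_d)-I(U;S,S_d)$; Decoder Y, which has $S^n$ and can regenerate $s_d^n$, decodes $U^n$ and then $X^n$ by standard superposition arguments to obtain $R_Y\le I(X;Y|U,S,S_d)$.

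For the converse I would let $T=h_s(S^n)$, so $H(T)\le nC_{12}$, and apply Fano's inequality to get $nR_Z\le I(M_Z;Z^n,T)+n\epsilon_n$ and $nR_Y\le I(M_Y;Y^n,S^n)+n\epsilon_n$. The key identifications are $U_i=(M_Z,T,Z^{i-1},S_{i+1}^n)$ and $S_{d,i}=(T,Z^{i-1},S_{i+1}^n)$, together with a time-sharing variable $Q$ uniform on $\{1,\dots,n\}$. Csiszár's sum identity then reorganizes $I(M_Z;Z^n,T)-I(M_Z;S^n,T)$ into $\sum_i[I(U_i;Z_i,S_{d,i})-I(U_i;S_i,S_{d,i})]$, isolates $I(X_i;Y_i|U_i,S_i,S_{d,i})$ as the $R_Y$ bound after invoking the Markov structure of the physically degraded channel, and unfolds the chain $nC_{12}\ge H(T)\ge I(T;S^n)-I(T;Z^n)$ into $\sum_i[I(S_i;S_{d,i})-I(Z_i;S_{d,i})]$, recovering (\ref{constraint1}).

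The main obstacle is verifying that the single-letter distribution produced by this identification factors exactly as demanded by (\ref{distribution3}), namely $P_{S}P_{S_d,U,X|S}P_{Y|X,S}P_{Z|Y}$. The physical-degradedness chain $(X,S)-Y-Z$ is automatic from the channel law, but one must confirm that $(S_{d,i},U_i,X_i)$ depend on the remaining randomness only through $S_i$ once $Q=i$ is fixed, and — more delicately — that the \emph{same} $S_d$ supports both the Gel'fand-Pinsker-style upper bound on $R_Z$ and the Wyner-Ziv-style lower bound on $C_{12}$. Standard cardinality bounds on $U$ and $S_d$, obtained via the Fenchel-Eggleston-Carath\'eodory argument, would then close the region.
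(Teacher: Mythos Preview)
Your proposal is correct and matches the paper's approach: Wyner--Ziv compression of $S^n$ combined with a Gel'fand--Pinsker/superposition code treating $(S,S_d)$ as known state for achievability, and a converse built on $S_{d,i}=(M_{12},S_{i+1}^n,Z^{i-1})$ and $U_i=M_Z$ via Csisz\'ar's sum identity (your inclusion of $S_{d,i}$ inside $U_i$ is harmless since $S_{d,i}$ already sits in every conditioning). The only cosmetic difference is that the paper generates $x^n$ by a second covering step on $\prod_i p(x_i|u_i)$ rather than direct sampling from $P_{X|U,S,S_d}$, which produces a redundant sum-rate constraint it then discards.
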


\begin{remark}
As was noted in \cite{Steinberg08RateLimitedAtDecoder}, we can replace the rate bound on $R_Z$ in (\ref{CapacityRegion3.1}) with the bound:
\begin{eqnarray}
R_Z&\leq& I(U;Z|S_d)-I(U;S|S_d).
\end{eqnarray}
This can easily be seen by applying the chain rule on the expressions on the right-hand side of (\ref{CapacityRegion3.1}) as follows
\begin{align*}
I(U;Z,S_d)-I(U;S,S_d)&=I(U;S_d)+I(U;Z|S_d)-I(U;S_d)-I(U;S|S_d)\\
&=I(U;Z|S_d)-I(U;S|S_d).
\end{align*}
\end{remark}

\begin{remark}
Observe that the region (\ref{CapacityRegion3}) is contained in the region (\ref{CapacityRegion1}) given in Theorem \ref{Theorem1}. The rate $R_Z$ can be further bounded as follows:
\begin{align*}
R_Z &\leq I(U;Z|S_d)-I(U;S|S_d)\\
    &= I(U,S_d;Z)-I(U,S_d;S)- I(S_d;Z)+I(S_d;S)\\
    &\leq I(U,S_d;Z)-I(U,S_d;S)+C_{12}\\
    &= I(\tilde{U};Z)-I(\tilde{U};S)+C_{12},
\end{align*}
where we take $\tilde{U} = (U,S_d)$. Furthermore, with the definition of $\tilde{U}$, the rate $R_Y$ is bounded by
\begin{align*}
R_Y &\leq I(X;Y|U,S_d,S)\\
    &= I(X;Y|\tilde{U},S).
\end{align*}
Finally, notice that  the sum of rates, $R_Z+R_Y$, satisfies $R_Z+R_Y\leq I(X;Y|S)$, since
\begin{align*}
R_Z+R_Y &\leq I(X;Y|S)\\
        &= I(\tilde{U},X;Y|S)\\
        &= I(U,S_d:Y|S) + I(X;Y|U,S_d,S)\\
        &= I(U,S_d:Y,S) -I(U,S_d:S) + I(X;Y|U,S_d,S),
\end{align*}
where
\begin{align*}
R_Z &\leq I(U;Z|S_d)-I(U;S|S_d)\\
    &\leq I(U,S_d:Y,S) -I(U,S_d:S).
\end{align*}
Hence we have that (\ref{CapacityRegion3}) $\subseteq$ (\ref{CapacityRegion1}).
\end{remark}
\par The proof of Theorem \ref{Theorem3} is given in Sections \ref{Achievability2} and \ref{Converse2}. The coding scheme that achieves the capacity region (\ref{CapacityRegion3}) uses techniques that include superposition coding and Gel'fand-Pinsker coding, which are similar to the proof of Theorem \ref{Theorem1}, with an addition of  Wyner-Ziv compression. The main idea is based on a Gel'fand-Pinsker code, but with several extensions. First, note that the Channel Encoder and Decoder Y are both informed of the sequence $s_d^n$, the compressed codeword that the State Encoder sends Decoder Z. This is due to the fact that they both know the state sequence $s^n$ and the State Encoder's strategy. The encoder then uses this knowledge to find a codeword $u^n$, in a bin associated with $m_Z$, that is jointly typical not only with $s^n $, as in the original Gel'fand-Pinsker scheme, but also with $s_d^n$. Each codeword $u^n$ is set as the center of $2^{nR_Y}$ bins, one bin for each message $m_Y$. Once a codeword $u^n$ is chosen, we look in its satellite bin $m_Y$ for a codeword $x^n$ such that it is jointly typical with that $u^n $, the state sequence $s^n$ and with $s^n_d$. Upon transmission, the State Encoder chooses a codeword $s_d^n$, the Channel Encoder chooses a codeword $u^n$ and a corresponding $x^n$, where $x^n$ is transmitted through the channel. Consequently, identifying $x^n$ leads to the identification of $u^n$.

\par At the decoder's end, the State Encoder sends Decoder Z a compressed version of $s^n$ by using the codewords $s_d^n$ in a Wyner-Ziv scheme, where Decoder Z uses the channel output $z^n$ as side information. The joint typicality of $z^n$ and $s_d^n$ used for decoding is a result of the channel Markov $S_d-(X,S)-Y-Z$ and the fact that the codewords $x^n$ are generated $\sim \prod_{i=1}^n p(x_i|u_i,s_{d,i})$. Finally, $s_d^ n$ is used as side information to identify the codeword $u^n$. As for the strong decoder, Decoder Y looks for codewords $u^n$ and $x^n$ that are jointly typical with the received $y^n$, the state sequence $s^n$, and the codeword $s_d^n$.

\section{Is Rate Splitting Optimal for Cooperation?}\label{SectionRS}

\par When dealing with cooperation settings, the most common approach is the use of rate splitting. Many coding schemes based on rate splitting have been known to achieve the capacity of channels involving cooperation. For example, rate splitting is the preferred coding method when coding for cooperative MACs, and it has been shown to be optimal \cite{Willems83_cooperating}, \cite{Willems85_cribbing_encoders} . However, when dealing with cooperative BCs, we show, by a numerical example, that rate splitting schemes are not necessarily optimal. Moreover, other techniques, such as binning, strictly outperform rate splitting.

\par The main idea of the rate splitting scheme is to split the message intended for the weaker decoder, $M_Z$, into two messages, $M_Z=(M_{Z_1},M_{Z_2})$. Next, we reorganize the messages. We concatenate part of the message intended for the weak decoder, $M_{Z_2}$, to the message intended for the strong decoder, $ M_Y$. In addition, we define new message sets, ${M'}_Z=M_{Z_1}$ and ${M'}_Y=(M_Y,M_{Z_2})$, where we choose $M_{Z_2}$ to be of size $\leq C_{12}$. Now that we have a new message set $({M'}_Z,{M'}_Y)$ , we transmit by using a Gelfand-Pinsker superposition coding scheme, such as the one described in \cite{steinberg_causal_BC_2005}. Once the strong decoder decodes both messages, $({M'}_Z,{M'}_Y) $ (which, in whole, equal $({M}_Z,{M}_Y)$), it uses the capacity link between the decoders, $C_{12 }$, to send the message $M_{Z_2}$ to the weak decoder. To sum up, this scheme results with the strong decoder decoding both messages, and the weak decoder decoding the original message $M_Z$.

\par The achievability scheme that uses the rate splitting method closely follows the achievability of Theorem \ref{Theorem1}, but with some alterations. In the rate splitting scheme, we define two sets of messages $\mathcal{M}_{Z_1}=\{1,2,...,2^{nR_{Z_1}}\}$ and $\mathcal{M}_{Z_2}=\{1,2,...,2^{nR_{Z_2}}\}$, where $|\mathcal{M}_{Z_1}||\mathcal{M}_{Z_2}|=|\mathcal{M}_Z|$ and $R_Z=R_{Z_1}+R_{Z_2}$, such that each message, $m_{Z}\in\{1,2,...,2^{n{R}_Z}\}$, is uniquely defined by a pair of messages $(m_{Z_1},m_{Z_2})$. Using these definitions, we can define a new pair of messages, $({m}'_Z,{m}'_Y)$, where we take $m'_Z=m_{Z_1}$, $R'_Z=R_{Z_1}$, $m'_y=(m_Y,m_{Z_2})$ and $R'_Y=R_Y+R_{Z_2}$. The code is now constructed in a similar manner to the code described in the triple-binning achievability scheme with respect to $(m'_Z,m'_Y)$, despite the fact that in this scheme additional partitioning into superbins is not required. However, we will see that this fact turns out to be significant.

\par To transmit $(m_Y,m_Z)$ in the encoding stage, we first construct the corresponding pair $(m'_{Z},m'_{Y})$. The rest of the encoding is preformed in a manner similar to the encoding in Section \ref{Achievability1} with respect to the constructed $({m'}_Z,{m'}_Y)$. The decoding stage is also similar, except that now Decoder Y, upon decoding the messages $(\hat{m'}_Y,\hat{m'}_Z)$, uses the link $C_{12}$ to send the message $\hat{M}_{Z_2}$ to Decoder Z (instead of a bin number as in the achievability of Theorem \ref{Theorem1}). The code construction is illustrated in Fig. \ref{AchievadilityFigRateSplitting}.

\begin{figure}[h!]
\begin{center}
\begin{psfrags}
    \psfragscanon
    \psfrag{A}[][][0.9]{\ \ \ \ \ a codeword $u^n(m'_Z)$}
    \psfrag{B}[][][0.7]{$1$}
    \psfrag{C}[][][0.7]{$2$}
    \psfrag{D}[][][0.7]{\ \ \ \ \ \ \ \ \ \ \ \ \ \ \ \ \ $2^{n(I(U;Z)-I(U;S)}$}
    \psfrag{E}[][][0.5]{}
    \psfrag{F}[][][0.9]{\ \ \ \ \ \ \ \ \ \ \ \ \ \ \ \ \ \ \ \ \ \ \ \ \ \ \ \ \ \ \ $2^{nI(U;S)}$
 codewords in a bin}
    \psfrag{I}[][][0.9]{\ \ \ \ \ \ \ \ \ \ \ \ \ \ \ \ \ \ \ \ \ \ \ \ \ \ \ \ \ \ \ $2^{nI(U;Z)}$
 codewords in total}
    \psfrag{K}[][][0.9]{\ \ \ \ \ \ \ \ \ \ \ \ \ \ \ \ \ \ \ \ \ $u^n(m'_Z)$ cloud center}
    \psfrag{L}[][][0.9]{\ \ \ \ \ \ \ \ \ \ \ \ \ \ \ \ \ \ \ \ \ \ \ \ \ \ \ \ \ \ \ \ \ \ \ \ \
\ \ \ \ \ \ \ \ \ \ \ \ \ \ \ \ \ \ \ \ \ \ \ \ \ \ \ \ \ \ $2^{nI(X;Y|U,S)}$ satellite bins.
Each bin contains $2^{nI(X;S|U)}$ codewords $x^n(m'_Z,m'_Y)$}
    \centerline{\includegraphics[scale = 1]{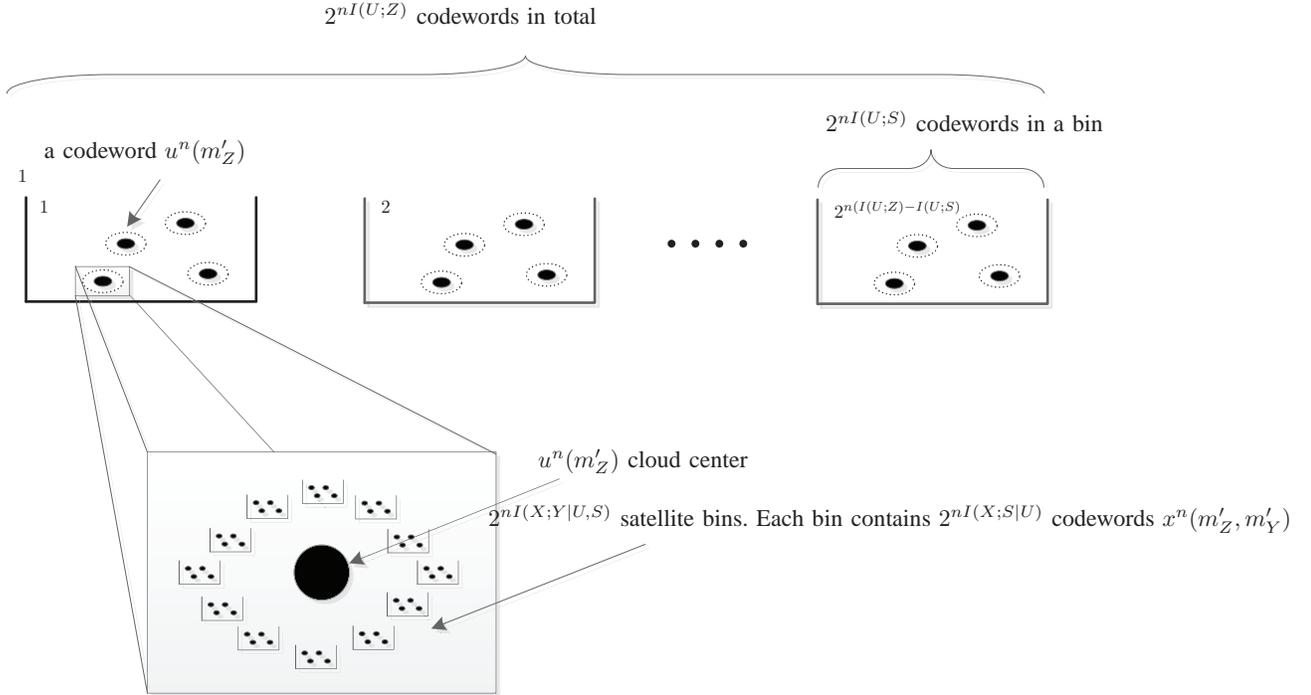}}
    \caption{The code construction for the rate splitting scheme. We can see that the code
construction is similar to the triple-binning scheme, except that here we do not partition the
bins associated with the messages $\tilde{m}_Z$ into superbins.} \label{AchievadilityFigRateSplitting}
\end{psfrags}
\end{center}
\end{figure}

Using the achievability result of the capacity region found in \cite[Theorem 3]{steinberg_causal_BC _2005} for the PDSD BC with state information known at the encoder and decoder, together with the fact that the rate $R_{Z_2}$ cannot be negative or greater than $C_{12}$, we derive the following bounds:
\begin{subequations}
\begin{eqnarray}
R_{Z_2}&\leq&C_{12}\label{RateSplitting1.1}\\
R_{Z_2}&\geq& 0\label{RateSplitting1.2}\\
R_{Z_1}&\leq& I(U;Z)-I(U;S)\label{RateSplitting1.3}\\
R_Y+R_{Z_2}&\leq& I(X;Y|U,S)\label{RateSplitting1.4}\\
R_{Z_1}+R_{Z_2}+R_Y&\leq& I(X;Y|S)\label{RateSplitting1.5}.
\end{eqnarray}
\end{subequations}

Recalling that $R_Z=R_{Z_1}+R_{Z_2}$, we substitute $R_{Z_1}$ with $R_Z-R_{Z_2}$ in the bounds (\ref{RateSplitting1.3}) and (\ref{RateSplitting1.5}). Next, by using the Fourier-Motzkin elimination, we eliminate the bounds that contain $R_{Z_2}$, (\ref{RateSplitting1.1}), (\ref{RateSplitting1.2}) and (\ref{RateSplitting1.4}). The resulting region is the following bounds on $R_Z$ and $R_Y$
\begin{subequations}\label{RateSplitting2}
\begin{eqnarray}
R_{Z}&\leq& I(U;Z)-I(U;S)+C_{12}\label{RateSplitting2.2}\\
R_Y&\leq& I(X;Y|U,S)\label{RateSplitting2.3}\\
R_Z+R_Y&\leq& I(U;Z)-I(U;S)+I(X;Y|U,S)\label{RateSplitting2.4}.
\end{eqnarray}
\end{subequations}
This region, (\ref{RateSplitting2}), is the achievable region as a result of rate splitting.
\par We note that in the process we derive an additional bound on the rate sum $R_Z+R_Y\leq I(X;Y|S)$; however, we can see that this bound satisfied automatically by satisfying (\ref{RateSplitting2.4}), since
\begin{align}
I(X;Y|S)&= I(U,X;Y|S)\nonumber\\
        &= I(U;Y|S)+I(X;Y|U,S)\nonumber\\
        &= I(U;Y,S)-I(U,S)+I(X;Y|U,S)\nonumber\\
        &\geq I(U;Z)-I(U;S)+I(X;Y|U,S)\label{RateSumBound},
\end{align}
where the last inequality is due to the degradedness properties of the channel. Moreover, we also omit the bound $C_{12}\geq0$, which is follows from the problem setting.
\par Examining the region (\ref{RateSplitting2}) we notice that its form differs from the capacity region of this channel (\ref{CapacityRegion1}). Therefore, an interesting question rises: Are rate splitting coding schemes optimal for BCs with cooperating decoders? We answer this question in the following lemma.
\begin{lemma}\label{lemma1RS}
Using rate splitting coding for BCs with cooperating decoders is not necessarily optimal.
\end{lemma}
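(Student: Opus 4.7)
The plan is to show that the rate-splitting region (\ref{RateSplitting2}) is, in general, strictly smaller than the capacity region (\ref{CapacityRegion1}). Comparing the two regions, the individual bounds (\ref{CapacityRegion1.1})--(\ref{CapacityRegion1.2}) and (\ref{RateSplitting2.2})--(\ref{RateSplitting2.3}) coincide, so the only possible gap must lie in the sum-rate constraint: the capacity region has $R_Z+R_Y\le I(X;Y|S)$, whereas rate splitting has $R_Z+R_Y\le I(U;Z)-I(U;S)+I(X;Y|U,S)$. By (\ref{RateSumBound}) the latter never exceeds the former, and an easy chain-rule computation shows the difference equals $I(U;Y,S)-I(U;Z)$, which is strictly positive whenever $Y$ together with $S$ is strictly more informative about the auxiliary $U$ than $Z$ alone---the generic situation for a non-degenerate physically degraded channel.

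To turn this inequality of bounds into an actual gap between achievable regions, I would exhibit a concrete channel, namely the stateless binary symmetric cooperative BC treated in Section \ref{SubSectionBinarySymetric}: $S=\emptyset$, $Y=X\oplus N_1$ and $Z=Y\oplus N_2$, with independent Bernoulli noises $N_1\sim\mathrm{Bernoulli}(p_1)$ and $N_2\sim\mathrm{Bernoulli}(p_2)$, and a moderate cooperation rate $C_{12}$. For this channel (\ref{CapacityRegion1.3}) gives $R_Z+R_Y\le 1-h(p_1)$ under a uniform input, while the rate-splitting sum-rate bound (\ref{RateSplitting2.4}) reduces to $R_Z+R_Y\le I(U;Z)+I(X;Y|U)$. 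I would then identify an explicit rate pair $(R_Z^\star,R_Y^\star)$ lying on the capacity sum-rate boundary---achieved by taking $U$ to be $X$ passed through a BSC with some crossover parameter $\alpha\in(0,1/2)$, with $C_{12}$ chosen just large enough that the $R_Z$ bound (\ref{CapacityRegion1.1}) is not active---and verify that this pair violates (\ref{RateSplitting2.4}) for the same choice of $U$.

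The main obstacle is that the rate-splitting region is a \emph{union} over all admissible test distributions $P_{U,X}$, so a rate pair is excluded only if \emph{no} choice of $U$ works. Hence the real work is the maximization of $I(U;Z)+I(X;Y|U)$ over $P_{U,X}$ with $X$ uniform. Thanks to the binary symmetric structure and the symmetry of the optimal input, the problem reduces to a one-dimensional optimization over the BSC crossover $\alpha$ describing $P_{U|X}$; standard binary-entropy manipulations should yield a closed-form expression for the maximum in terms of $h(\cdot)$ and the binary convolutions $\alpha\ast p_1$ and $\alpha\ast p_1\ast p_2$. Comparing this maximum with the capacity sum-rate value $1-h(p_1)$ produces the strict inequality that certifies $(R_Z^\star,R_Y^\star)\in(\ref{CapacityRegion1})$ while $(R_Z^\star,R_Y^\star)\notin(\ref{RateSplitting2})$.

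Once such a rate pair is exhibited for explicit parameters $(p_1,p_2,C_{12})$, the lemma follows: rate splitting fails to reach some rate pair that the triple-binning scheme achieving the capacity region of Theorem \ref{Theorem1} does reach, so rate splitting is not optimal for BCs with cooperating decoders in general. The full numerical verification, including the optimization over $\alpha$ and a concrete choice of $(p_1,p_2,C_{12})$, is carried out in Section \ref{SubSectionBinarySymetric}.
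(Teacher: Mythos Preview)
Your overall plan---use the stateless binary symmetric BC and exhibit a rate pair achievable by the triple-binning scheme but not by rate splitting---is exactly the route the paper takes. The difficulty you correctly flag, that (\ref{RateSplitting2}) is a union over all $P_{U,X}$, is also the crux of the paper's argument.

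However, the specific mechanism you propose for the separation does not work as stated. You write that the ``real work is the maximization of $I(U;Z)+I(X;Y|U)$ over $P_{U,X}$ with $X$ uniform'' and that comparing this maximum with the capacity sum-rate $1-h(p_1)$ ``produces the strict inequality.'' It does not: taking $U$ independent of $X$ (equivalently $\alpha=1/2$ in the BSC parametrization $X=U\oplus V$) gives $I(U;Z)=0$ and $I(X;Y|U)=I(X;Y)=1-h(p_1)$, so the rate-splitting sum-rate bound reaches $1-h(p_1)$ exactly. The sum-rate constraint alone therefore cannot separate the two regions, and a point chosen purely on the capacity sum-rate boundary is not automatically excluded from (\ref{RateSplitting2}).

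What actually produces the gap is the \emph{interplay} between the sum-rate constraint and the $R_Z$ constraint: the only auxiliaries $U$ that make the rate-splitting sum-rate touch $1-h(p_1)$ force $I(U;Z)=0$, whence $R_Z\le C_{12}$. The paper exploits this by working at the other corner, $R_Y=0$, where both (\ref{RateSplitting2.2}) and (\ref{RateSplitting2.4}) become upper bounds on $R_Z$ alone. Using the standard Mrs.~Gerber-type parametrization (as in \cite[Section 5.4.2]{ElGammalKim10LectureNotes}) it upper-bounds the rate-splitting region by (\ref{BinaryRS1}) uniformly over all $U$---not just BSC-type $U$---and then exhibits the explicit point $(R_Z,R_Y)=(0.1487,0)$ for $p_1=0.2$, $p_2=0.3$, $C_{12}=0.03$, which satisfies (\ref{BinaryRS2}) but violates (\ref{BinaryRS1}) for every $\alpha\in[0,1/2]$. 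Your approach can be repaired along the same lines: either move to the $R_Y=0$ corner as the paper does, or keep your sum-rate-boundary point but add the observation that any $U$ meeting the sum-rate with equality must have $I(U;Z)=0$, so that choosing $R_Z^\star>C_{12}$ finishes the argument. Either way, the reduction to a one-parameter family requires Mrs.~Gerber's lemma rather than a bare symmetry claim.
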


\begin{proof}
We would like to show that the rate splitting coding scheme that derives the region (\ref{RateSplitting2}) does not achieve the capacity of the channel given by (\ref{CapacityRegion1}) in Theorem \ref{Theorem1}. In order to do so, we need to show that the region (\ref{CapacityRegion1}) is strictly larger than the region achievable by rate splitting, (\ref{RateSplitting2}). The region (\ref{CapacityRegion1}) is shown to be achievable in Section \ref{Achievability1} by using triple-binning. Thus, by showing that (\ref{CapacityRegion1}) is strictly larger than (\ref{RateSplitting2}), we can conclude that the rate splitting method is not optimal.

\par Firstly, it is easy to see that the region (\ref{CapacityRegion1}) contains (\ref{RateSplitting2}), since the bounds on $R_Z$ and $R_Y$ are the same, yet the bound on the rate sum (\ref{CapacityRegion1.3}) is greater than or equal to (\ref{RateSplitting2.4}), as shown in (\ref{RateSumBound}).
%But is (\ref{CapacityRegion1}) strictly larger then (\ref{RateSplitting2})?
However, to show that (\ref{CapacityRegion1}) is strictly larger than (\ref{RateSplitting2}), we need to show that for all distributions of the form (\ref{distribution1}):
 \begin{equation}
 \Big{\{}\exists (R_Z,R_Y)\in (\ref{CapacityRegion1}): (R_Z,R_Y)\notin (\ref{RateSplitting2})\Big{\}}.
 \end{equation}
This is not an easy task. If we look at the regions in their general form, we need to find a pair $(R_Z,R_Y)\in (\ref{CapacityRegion1})$ and show that for every random variable $U$ we choose, $(R_Z,R_Y)\notin (\ref{RateSplitting2})$. Nevertheless, we can show that (\ref{CapacityRegion1}) is strictly larger than (\ref{RateSplitting2}) by considering a specific channel and and showing that for this specific setting $(\ref{RateSplitting2})\subset(\ref{CapacityRegion1})$.

\subsection{The Special Case of the Binary Symmetric Broadcast Channel}\label{SubSectionBinarySymetric}

\par Consider the binary symmetric BC, \cite[Section 5.3]{ElGammalKim10LectureNotes}, illustrated in Fig. \ref{BinarySymmetricBC}. Here, $Y=X\oplus W_1$, $Z=X\oplus W_2$, where $W_1 \sim$Ber($p_1$) and $W_2\sim$Ber($p_2$). Note that we can present this channel as a physically degraded BC, where $Y=X\oplus {W}_1$, $Z=X\oplus \tilde{W}_2$ and ${W}_1\sim$Ber($p_1$), $\tilde{W}_2\sim$Ber($\frac{p_2-p_1}{1-2p_2}$). This channel is a special case of our setting, where the channel is not state-dependent (hence, we take the state as a constant).

\begin{figure}[h!]
\begin{center}
\begin{psfrags}
    \psfragscanon
    \psfrag{A}[][][0.9]{\ \ \ \ \ $X$}
    \psfrag{B}[][][0.9]{\ \ \ \ \ \ \ \ \ \ \ \ \ \ \ \ \ \ \ \ $W_1\sim$Ber($p_1$)}
    \psfrag{C}[][][0.9]{\ \ \ \ \ \ \ \ \ \ \ \ \ \ \ \ \ \ \ \ $W_2\sim$Ber($p_2$)}
    \psfrag{D}[][][0.9]{\ \ \  $Y$}
    \psfrag{E}[][][0.9]{\ \ \ $Z$}
    \psfrag{F}[][][0.9]{\ \ \ \ \ \ \ \ \ \ \ \ \ \ \ \ \ \ \ \ $W_1\sim$Ber($p_1$)}
    \psfrag{G}[][][0.9]{\ \ \ \ \ \ \ \ \ \ \ \ \ \ \ \ \ \ \ \ $\tilde{W}_2\sim$Ber($\frac{p_2-p_1}{1-2p_2}$)}
    \centerline{\includegraphics[scale = 1.5]{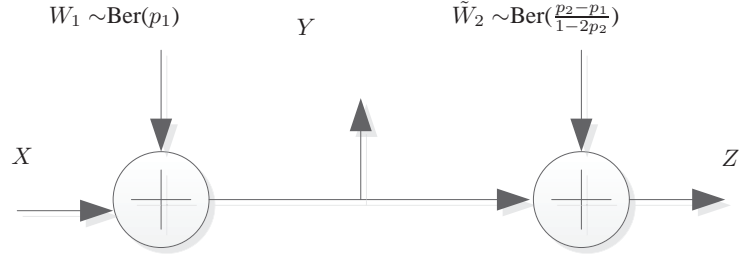}}
    \caption{The physically degraded binary symmetric BC.} \label{BinarySymmetricBC}
\end{psfrags}
\end{center}
\end{figure}

\par Following closely the arguments given in \cite[Section 5.4.2]{ElGammalKim10LectureNotes} we can upper bound the region (\ref{RateSplitting2}) by considering all the sets of rate pairs $(R_Z,R_Y)$ such that
\begin{subequations}\label{BinaryRS1}
\begin{eqnarray}
R_Z&\leq& 1-H(\alpha\ast p_2)+C_{12}\label{BinaryRS1.1}\\
R_Y&\leq& H(\alpha\ast p_1)- H(p_1)\label{BinaryRS1.2}\\
R_Z+R_Y&\leq& 1-H(\alpha\ast p_2)+H(\alpha\ast p_1)- H(p_1)\label{BinaryRS1.3}.
\end{eqnarray}
\end{subequations}
for some $\alpha\in[0,\frac{1}{2}]$. In contrast, we can show that by taking $X=U\oplus V$, where $U\sim$Ber($\frac{1}{2}$) and $V\sim$Ber($\alpha$), and calculating the corresponding expressions of (\ref{CapacityRegion1}), the following region of rate pairs, $(R_Z,R_Y)$ such that
\begin{subequations}\label{BinaryRS2}
\begin{eqnarray}
R_Z&\leq& 1-H(\alpha\ast p_2)+C_{12}\label{BinaryRS2.1}\\
R_Y&\leq& H(\alpha\ast p_1)- H(p_1)\label{BinaryRS2.2}\\
R_Z+R_Y&\leq& 1- H(p_1)\label{BinaryRS2.3},
\end{eqnarray}
\end{subequations}
is achievable via the binning scheme.

\begin{figure}[h!]
\begin{center}
\begin{psfrags}
    \psfragscanon
    \psfrag{A}[][][0.9]{$R_Y$}
    \psfrag{B}[][][0.9]{$R_Z$}
    \psfrag{D}[][][0.9]{Rate Region for $C_{12}=0.05$}
    \psfrag{E}[][][0.9]{Rate Region for $C_{12}=0.1$}
    \psfrag{F}[][][0.9]{Rate Region for $C_{12}=0.2$}
    \psfrag{G}[][][0.9]{Rate Region for $C_{12}=0.3$}
    \centerline{\includegraphics[scale = 0.7]{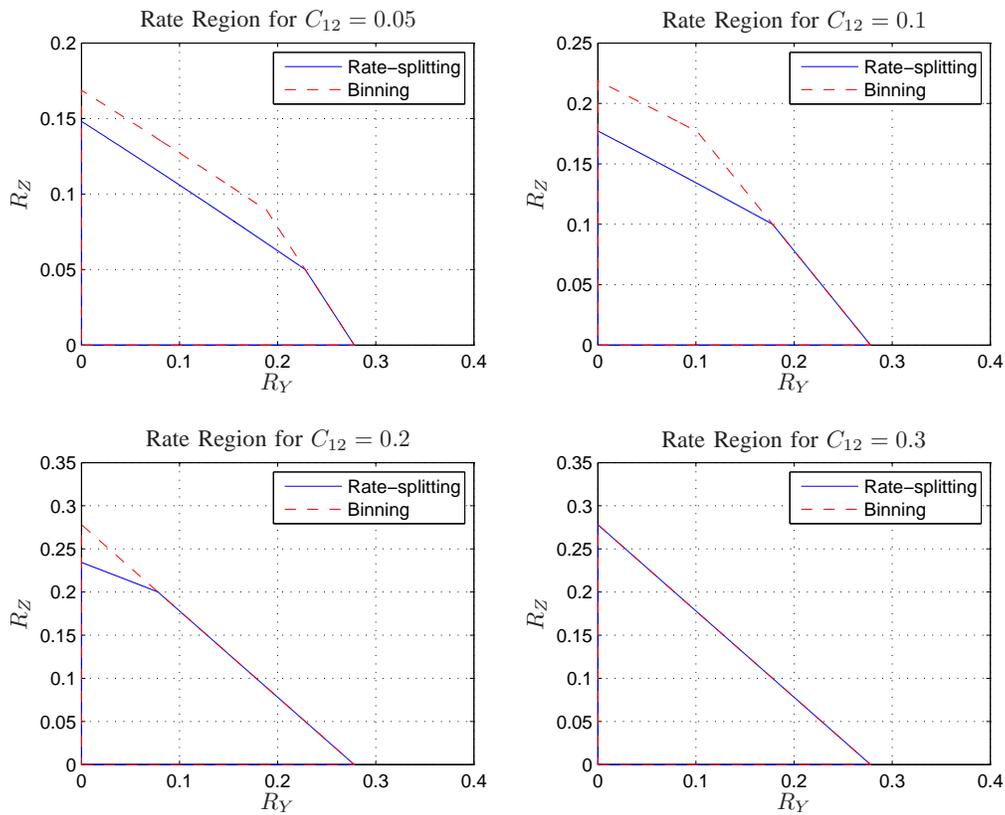}}
    \caption{The upper bound for the region (\ref{RateSplitting2}), which is calculated in (\ref{BinaryRS1}), is plotted using the solid line and corresponds to the smaller region. The achievable region for (\ref{CapacityRegion1}), given by the expressions in (\ref{BinaryRS2}), is plotted using the dashed line and corresponds to the larger region. Both regions in all the figures are plotted for values of $p_1=0.2$, $p_2=0.3$, where each figure corresponds to a different value of $C_{12}$.} \label{BinarySymmetricBCFig}%PDSD_BC
\end{psfrags}
\end{center}
\end{figure}

\par Consequently, we can see that the region (\ref{BinaryRS2}) is strictly larger than the region (\ref{BinaryRS1}). For example, consider Fig. \ref{BinarySymmetricBCFig}, where we take $p_1=0.2$, $p_2=0.3$ and $C_{12}=0.03$. Looking at both regions, we can see that taking by $R_Y$ to be zero, the point $(R_Z,R_Y)=(0.1487,0)$ is achievable in the binning region (\ref{BinaryRS2}) (the doted line) for $\alpha=0$, but it is not achievable in the rate splitting region (\ref{BinaryRS1}) (the solid line) for any value of $\alpha\in[0,\frac{1}{2}]$.

%\par The largest value of the rate $R_Z$ in the region (\ref{BinaryRS2}) is equal to
%\begin{align*}
%R_{Z_{max,binning}}&=\max_{\alpha\in[0,\frac{1}{2}]}\min\Big{\{}1-H(\alpha\ast p_2)+C_{12} \ \ \ , \ \ \ 1- H(p_1)\Big{\}}\\
%            &=\min\Big{\{}1-H(p_2)+C_{12} \ \ \ , \ \ \ 1- H(p_1)\Big{\}}.
%\end{align*}
%Whereas the largest value of the rate $R_Z$ in the region (\ref{BinaryRS1}) is equal to
%\begin{align*}
%R_{Z_{max,RS}}&=\max_{\alpha\in[0,\frac{1}{2}]}\min\Big{\{}1-H(\alpha\ast p_2)+C_{12} \ \ \ , \ \ \ 1-H(\alpha\ast p_2)+H(\alpha\ast p_1)- H(p_1)\Big{\}}.
%\end{align*}
%Since the maximum value of $1-H(\alpha\ast p_2)+H(\alpha\ast p_1)- H(p_1)$ is achieved for $\alpha=\frac{1}{2}$ and is equal to $1-H(p_1)$, and the maximum value of $1-H(\alpha\ast p_2)+C_{12}$ is achieved for $\alpha=0$ and is equal to $1-H(\alpha\ast p_2)+C_{12}$, $R_{Z_{max,RS}}$ can only be equal

\par Thus, for the binary symmetric BC we have shown that an achievable region derived from (\ref{CapacityRegion1}) by a specific choice of $U$ is strictly larger than the upper bound for the region (\ref{RateSplitting2}). Therefore, we can conclude that (\ref{CapacityRegion1}) is strictly larger then (\ref{RateSplitting2}) and that the rate splitting coding scheme is not necessarily optimal for BCs.
\end{proof}

                        %%%%%%%%%%%%%%%%%%%%%%%%%%%%%%%%%%%%%%%%%%%%%%%%%%%%%%%%%%%%%%%%%%%%%%%%%%%%%%%%%%%%%%%%%%%%%%%%%%%%%%%%%%%%%%%%%%%%%%%%%%%%%%%%
                        %%%%%%%%%%%%%%%%%%%%%%%%%%%%%%%%%%%%%%%%%%%%%%%%%%%%%%%%%%%%%%%%%%%%%%%%%%%%%%%%%%%%%%%%%%%%%%%%%%%%%%%%%%%%%%%%%%%%%%%%%%%%%%%%
                        %%%%%%%%%%%%%%%%%%%%%%%%%%%%%%%%%%%%%%%%%%%                                     %%%%%%%%%%%%%%%%%%%%%%%%%%%%%%%%%%%%%%%%%%%%%%%%
                        %%%%%%%%%%%%%%%%%%%%%%%%%%%%%%%%%%%%%%%%%%%        Proof of Achievability       %%%%%%%%%%%%%%%%%%%%%%%%%%%%%%%%%%%%%%%%%%%%%%%%
                        %%%%%%%%%%%%%%%%%%%%%%%%%%%%%%%%%%%%%%%%%%%                                     %%%%%%%%%%%%%%%%%%%%%%%%%%%%%%%%%%%%%%%%%%%%%%%%
                        %%%%%%%%%%%%%%%%%%%%%%%%%%%%%%%%%%%%%%%%%%%%%%%%%%%%%%%%%%%%%%%%%%%%%%%%%%%%%%%%%%%%%%%%%%%%%%%%%%%%%%%%%%%%%%%%%%%%%%%%%%%%%%%%
                        %%%%%%%%%%%%%%%%%%%%%%%%%%%%%%%%%%%%%%%%%%%%%%%%%%%%%%%%%%%%%%%%%%%%%%%%%%%%%%%%%%%%%%%%%%%%%%%%%%%%%%%%%%%%%%%%%%%%%%%%%%%%%%%%

\section{Proofs}\label{SectionProofs}
\subsection{Proof of Corollary \ref{cor: eq.regions}}\label{proof:cor:eq_regions}
Let us denote our region without states by ${\cal A}$. It is characterized as the union of all rate pairs $(R_y,R_z)$ satisfying:
\begin{subequations}
\label{eq:YS1}
\begin{IEEEeqnarray}{rCl}
R_z &\leq& I(U;Z) +C_{12} \label{eq:YS1_1}\\
R_y &\leq& I(X;Y|U) \label{eq:YS1_2}\\
R_y+R_z &\leq& I(X;Y) \label{eq:YS1_3}
\end{IEEEeqnarray}
\end{subequations}
for some joint distribution
\begin{equation}
P_{U,X,Y,Z} = P_U P_{X|U} P_{Y|X}P_{Z|Y} \label{eq:YS_joint}
\end{equation}
where $P_{Y|X}P_{Z|Y}$ is the original BC (without states). The region of Dabora and Servetto, presented in \cite{R.Dabora_S.ServettoRS}, is the union of all rate pairs $(R_y,R_z)$ satisfying \begin{subequations}
\label{eq:YS2}
\begin{IEEEeqnarray}{rCl}
R_z &\leq& \min \left\{ I(U;Z) + C_{12}, I(U;Y) \right\} \label{eq:YS2_1}\\
R_y &\leq& I(X;Y|U) \label{eq:YS2_2}
\end{IEEEeqnarray}
\end{subequations}
for some joint distribution~\eqref{eq:YS_joint}. For brevity, we denote the region of Dabora and Servetto by ${\cal B}$. It is simple to show that ${\cal B}\subseteq {\cal A}$. We now proceed to show the reverse inclusion, i.e., ${\cal A}\subseteq {\cal B}$. Let $(R_y,R_z)$ be a rate pair in ${\cal A}$, achieved with a given pair of  random variables $(U,X)$. If $R_y=I(X;Y|U)$, then by~\eqref{eq:YS1_3} and the Markov structure~\eqref{eq:YS_joint}, we also have:
\begin{equation}
R_z\leq I(U;Y) \label{eq:YS3}
\end{equation}
and~\eqref{eq:YS1_1}, \eqref{eq:YS3}, \eqref{eq:YS1_2} coincide with the region ${\cal B}$. Therefore, we have only to examine the case where a strict inequality holds in~\eqref{eq:YS1_2}. Thus, let
\begin{equation}
R_y = I(X;Y|U) - \gamma \label{eq:YS4}
\end{equation}
for some $\gamma>0$. Define the random variable
\begin{equation}
U^* = \left\{ \begin{array}{ll}
                U & \mbox{w.p. } \lambda\\
                         X & \mbox{w.p. } 1-\lambda.
                         \end{array}\right.
                         \label{eq:YS4}
\end{equation}
Clearly, the Markov structure
\begin{equation}
 U^*- X- Y- Z \label{eq:YS4a}
 \end{equation}
still holds. Moreover
\begin{IEEEeqnarray}{rCl}
I(X;Y|U^*) &=& I(X;Y|U^*=U)P(U^*=U) + I(X;Y|U^*=X)P(U^*=X)\nonumber \\
                  &=& \lambda I(X;Y|U) \label{eq:YS5}
\end{IEEEeqnarray}
(In~\eqref{eq:YS5} and in the sequel, by $I(X;Y|U^*=U)$ we mean $I(X;Y|U^*,U^*=U)$, that is, the conditioning is not only on the event that $U^*=U$ but also on the specific value.) Now, we choose $\lambda$ to be
\begin{equation}
\lambda = \frac{I(X;Y|U) - \gamma}{I(X;Y|U)}.\label{eq:YS6}
\end{equation}
Note that with this choice, the following holds
\begin{equation}
R_y = I(X;Y|U) - \gamma = I(X;Y|U^*) \label{eq:YS7}
\end{equation}
and
\begin{IEEEeqnarray}{rCl}
R_y+R_z &\leq& I(X;Y) = I(XU^*;Y) = I(U^*;Y) + I(X;Y|U^*)\nonumber\\
 &=& I(U^*;Y) +R_y. \label{eq:YS8}
\end{IEEEeqnarray}
so that
\begin{equation}
R_z\leq I(U^*;Y). \label{eq:YS9}
\end{equation}
We now turn to bound $I(U^*;Z)$. For this purpose, observe that we can decompose $I(X;Z)$ as
\begin{subequations}
\label{eq:YS10}
\begin{IEEEeqnarray}{rCl}
I(X;Z) &=& I(U;Z) + I(X;Z|U)\label{eq:YS10_1}\\
          &=& I(U^*;Z) + I(X;Z|U^*)\nonumber\\
          &=& I(U^*;Z) + I(X;Z|U^*=U)P(U^*=U) \nonumber\\
          && +I(X;Z|U^*=X)P(U^*=X)\nonumber\\
          &=& I(U^*;Z) + \lambda I(X;Z|U).\label{eq:YS10_4}
\end{IEEEeqnarray}
\end{subequations}
From~\eqref{eq:YS10_1} and \eqref{eq:YS10_4} we obtain
\begin{equation}
I(U^*;Z) = I(U;Z) + (1-\lambda)I(X;Z|U) \geq I(U;Z). \label{eq:YS11}
\end{equation}
Therefore,~\eqref{eq:YS1_1} and~\eqref{eq:YS11} imply
\begin{equation}
R_z\leq I(U^*Z) + C_{12}.\label{eq:YS12}
\end{equation}
From~\eqref{eq:YS9}, \eqref{eq:YS12}, and~\eqref{eq:YS7} we have
\begin{subequations}
\label{eq:YS13}
\begin{IEEEeqnarray}{rCl}
R_z &\leq& \min \left\{ I(U^*;Z) + C_{12}, I(U^*;Y) \right\} \label{eq:YS13_1}\\
R_y &\leq& I(X;Y|U^*) \label{eq:YS13_2}
\end{IEEEeqnarray}
\end{subequations}
which, together with the Markov structure~\eqref{eq:YS4a}, imply that $(R_y,R_z)\in{\cal B}$.

\subsection{Proof of Achievability for Theorem \ref{Theorem1}}\label{Achievability1}
\par In this section, we prove the achievability part of Theorem \ref{Theorem1}. Throughout the achievability proof we use the definition of a strong typical set \cite{ElGammalKim10LectureNotes}. The set $\mathcal{T}_\epsilon^{(n)}(X,Y,Z)$ of $\epsilon$-typical $n$-sequences is defined by $\{(x^n,y^n,z^n):\frac{1}{n}|N(x,y,z|x^n,y^n,z^n)-p(x,y,z)|\leq \epsilon\cdot p(x,y,z)$ $\forall(x,y,z)\in \mathcal{X}\times\mathcal{Y}\times\mathcal{Z}\}$, where $N(x,y,z|x^n,y^n,z^n)$ is the number of appearances of $(x,y,z)$ in the $n$-sequence $(x^n,y^n,z^n)$.
%\par The main idea of the proof is based on a random coding scheme, where the encoding of the messages is done in three parts. First, the uninformed encoder transmits $X_1^n(m_1)$ from it's code-book at rate $I(X_1;Y)$. Secondly, the informed encoder chooses an action sequence $A^n$. As a result, a state $S^n$ is generated. The action sequence is then sent at rate $I(A;Y|X_1)$, where $X_1$ is treated as side information known at both the informed encoder and the decoder. Next, the informed encoder can transmit at rate $I(U;Y|A,X_1) - I(U;S|A,X_1)$ using a Gel'fand-Pinsker scheme; i.e. choosing a codeword $U^n$ from a subcode-book $C(m_1,m_2)$, such that it is jointly typical with $(X_1^n,A^n,S^n)$, where$(X_1^n,A^n)$ were already decoded and are now used as side information at the encoder and decoder.
%We generate a set of codewords for the uniformed encoder, and a set of codewords for the informed encoder.

%\subsubsection{Proof by Triple-Binning}
\
\begin{proof}
\par Fix a joint distribution of $P_{S,U,X,Z,Y} = P_{S}P_{U|S}P_{X|S,U}P_{Y,Z|X,S}$ where $P_{Y,Z|X,S}=P_{Y|X,S}P_{Z|Y}$ is given by the channel. %Without loss of generality, we assume the state space is finite and takes the values $\mathcal{S}=\{1,2,...,|\mathcal{S}|\}$.

\par{\it Code Construction}: First, generate $2^{nC_{12}}$ superbins. Next, generate $2^{nR_Z}$ bins, one for each message $m_Z\in\{1,2,...,2^{nR_Z}\}$. Partition the bins among the superbins in their natural ordering such that each superbin $l\in\{1,2,...,2^{nC_{12}}\}$ contains the bins associated with the messages $m_Z \in\{(l-1)2^{n(R_Z-C_{12})}+1,...,l2^{n(R_Z-C_{12})}\}$. Thus, each superbin contains $2^{n(R_Z-C_{12})}$ bins. Second, for each bin generate $2^{n\tilde{R}_Z}$ codewords $u^n(m_Z,j)$, where $j\in\{1,2,...,2^{n\tilde{R}_Z}\}$. Each codeword, $u^n(m_Z,j)$, is generated i.i.d. $\sim\prod_{i=1}^n p(u_i)$. Third, for each codeword $u^n(m_Z,j)$ generate $2^{nR_Y}$ satellite bins. In each satellite bin generate $2^{n\tilde{R}_Y}$ codewords $x^n(m_Z,j,m_Y,k)$, where $k\in\{1,2,...,2^{n\tilde{R}_Y}\}$, i.i.d. $\sim\prod_{i=1}^n p(x_i(m_Z,j,m_Y,k)|u_i(m_Z,j))$. The code construction is illustrated in Fig. \ref{AchievabilityFig}.
\begin{figure}[h!]
\begin{center}
\begin{psfrags}
    \psfragscanon
    \psfrag{A}[][][0.9]{\ \ \ \ \ a codeword $u^n$}
    \psfrag{B}[][][0.7]{$1$}
    \psfrag{C}[][][0.7]{$2$}
    \psfrag{D}[][][0.55]{\ \ \ \ \ \ \ \ \ \ \ \ \ \ \ \ \ $2^{n(I(U;Z)-I(U;S)}$}
    \psfrag{E}[][][0.5]{}
    \psfrag{F}[][][0.9]{\ \ \ \ \ \ \ \ \ \ \ \ \ \ \ \ \ \ \ \ \ \ \ \ \ \ \ \ \ \ \ $2^{nI(U;S)}$ codewords in a bin}
    \psfrag{G}[][][0.9]{\ \ \ \ \ \ \ \ \ \ \ \ \ \ \ \ \ \ \ \ \ \ \ \ \ \ \ \ \ \ \ \ \ \ \ \ \ \ \ \ \ \ \ \ \ \ \ \ \ \ \ \ \ \ $2^{n\big{(}I(U;Z)-I(U;S)\big{)}}$ bins in a superbin}
    \psfrag{H}[][][0.9]{\ \ \ \ \ \ \ \ \ \ \ $2^{nC_{12}}$ superbins}
    \psfrag{I}[][][0.9]{\ \ \ \ \ \ \ \ \ \ \ \ \ \ \ \ \ \ \ \ \ \ \ \ \ \ \ \ \ \ \ $2^{nI(U;Z)}$ codewords in a superbin}
    \psfrag{J}[][][0.7]{\ \ \ \ \ \ $2^{nC_{12}}$}
    \psfrag{K}[][][0.9]{\ \ \ \ \ \ \ \ \ \ \ \ \ \ \ \ \ \ \ \ \ $u^n$ cloud center}
    \psfrag{L}[][][0.9]{\ \ \ \ \ \ \ \ \ \ \ \ \ \ \ \ \ \ \ \ \ \ \ \ \ \ \ \ \ \ \ \ \ \ \ \ \ \ \ \ \ \ \ \ \ \ \ \ \ \ \ \ \ \ \ \ \ \ \ \ \ \ \ \ \ \ \ \ \ \ \ \ \ \ \  $2^{nI(X;Y|U,S)}$ satellite bins. Each bin contains $2^{nI(X;S|U)}$ codewords $x^n$}
    \centerline{\includegraphics[scale = .8]{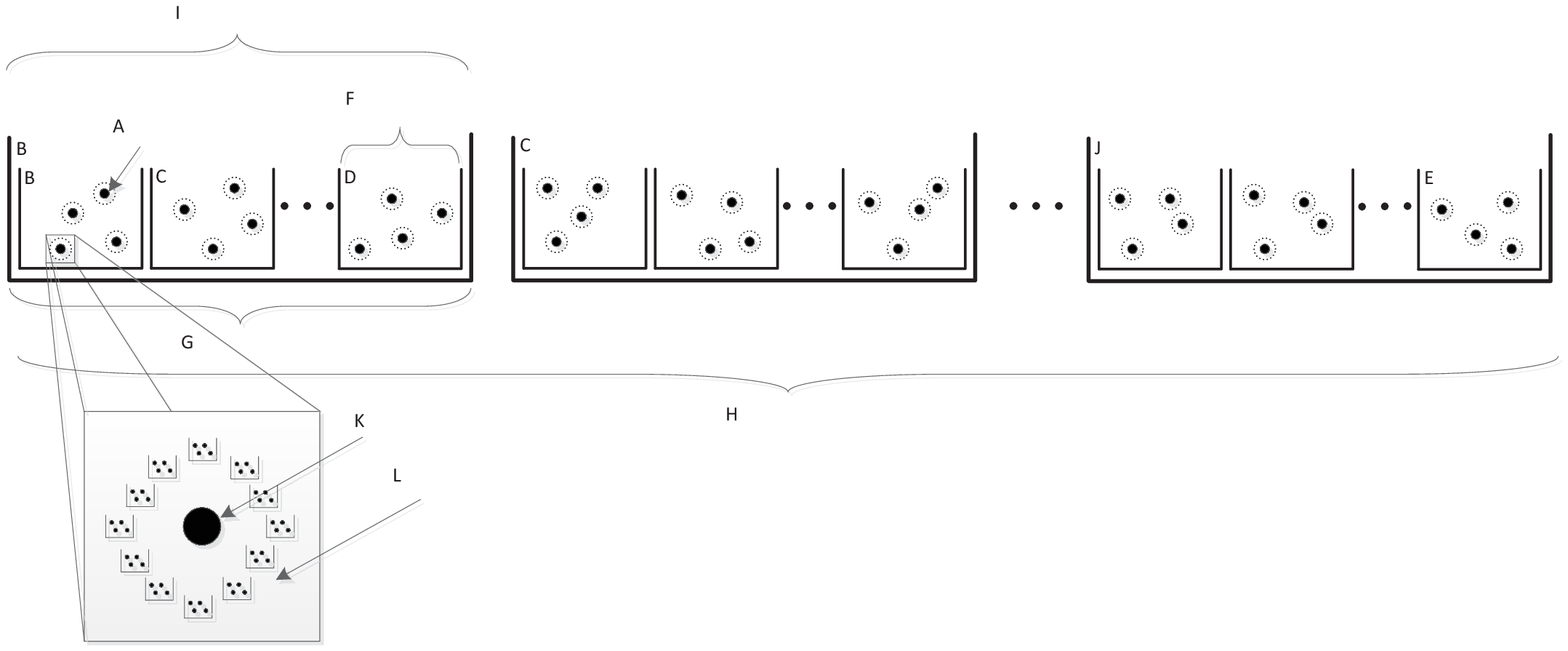}}
    \caption{The code construction. We have $2^{nC_{12}}$ superbins, each one containing $2^{n\big{(}I(U;Z)-I(U;S)\big{)}}$ bins. In each bin we have $2^{nI(U;S)}$ codewords $u^n$, so in total each superbin contains $2^{nI(U;Z)}$ codewords $u^n$. Finally, each codeword $u^n$ plays the role of a cloud center and is associated with $2^{nI(X;Y|U,S)}$  satellite codewords $x^n$. } \label{AchievabilityFig}
\end{psfrags}
\end{center}
\end{figure}

\par {\it Encoding}: To transmit $(m_Y,m_Z)$, the encoder first looks in the bin associated with the message $m_Z$ for a codeword $u^n(m_Z,j)$ such that it is jointly typical with the state sequence, $s^n$, i.e.
\begin{equation}
(u^n(m_Z,j),s^n)\in \mathcal{T}_{\epsilon'}^{(n)}(U,S).
\end{equation}
If such a codeword, $u^n$, does not exist, namely, no codeword in the bin $m_Z$ is jointly typical with $s^n$, choose an arbitrary $u^n$ from the bin (in such a case the decoder will declare an error). If there is more than one such codeword, choose the one for which $j$ is of the smallest lexicographical order. Next, the encoder looks for a sequence $x^n(m_Z,j,m_Y,k)$ (where $j$ was chosen in the first stage) such that it is jointly typical with the state sequence, $s^n$, and the codeword $u^n(m_Z,j)$, i.e.,
\begin{equation}
(x^n(m_Z,j,m_Y,k),u^n(m_Z,j),s^n)\in \mathcal{T}_{\epsilon'}^{(n)}(X,U,S).
\end{equation}
If such a codeword, $x^n$, does not exist, choose an arbitrary $x^n$ from the bin $m_Y$ (in such a case the decoder will declare an error). If there is more than one such codeword, choose the one for which $k$ is of the smallest lexicographical order.

\par {\it Decoding}:
\begin{enumerate}
  \item Let $\epsilon>\epsilon'$. Decoder Y looks for the smallest values of $(\hat{m}_Y,\hat{m}_Z)$ for which there exists a $\hat{j}$ and a $\hat{k}$ such that
  \begin{equation}
    (u^n(\hat{m}_Z,\hat{j})),x^n(\hat{m}_Z,\hat{j},\hat{m}_Y,\hat{k}),s^n,y^n)\in \mathcal{T}_{\epsilon}^{(n)}(U,X,S,Y).
  \end{equation}
    If no pair or more than one pair is found, an error is declared.
  \item Upon decoding the messages $(\hat{m}_Y,\hat{m}_Z)$, Decoder Y uses the link $C_{12}$ to send Decoder Z the superbin number, $\hat{l}$, that contains the decoded message $\hat{m}_Z$.
  \item Decoder Z looks in the superbin $\hat{l}$ for the smallest value of $\hat{m}_Z$  for which there exists a $\hat{j}$ such that
  \begin{equation}
    (u^n(\hat{m}_Z,\hat{j})),z^n)\in \mathcal{T}_{\epsilon}^{(n)}(U,Z).
\end{equation}
If no value or more than one value is found, an error is declared.
\end{enumerate}

\par{\bf Analysis of the probability of error:}\\
Without loss of generality, we can assume that messages $(m_Z,m_Y)=(1,1)$ were sent. Therefore, the superbin containing $m_Z=1$ is $l=1$.\\
We define the error events at the encoder:
\begin{align}
E_1 &= \{\forall j\in\{1,2,...,2^{n\tilde{R}_Z}\}: (U^n(1,j),S^n)\notin \mathcal{T}_{\epsilon'}^{(n)}(U,S)\},\\
E_2 &= \{\forall k\in\{1,2,...,2^{n\tilde{R}_Y}\}]: (X^n(1,1,j,k),U^n(1,j),S^n)\nonumber\\
& \ \ \ \ \ \notin \mathcal{T}_{\epsilon'}^{(n)}(X,U,S)\}.
\end{align}
We define the error events at Decoder Y:
\begin{align}
E_3 &= \{\forall j\in\{1,2,...,2^{n\tilde{R}_Z}\},\forall k\in\{1,2,...,2^{n\tilde{R}_Y}\}:\nonumber\\ &(U^n(j,1),X^n(1,j,1,k),S^n,Y^n)\notin \mathcal{T}_{\epsilon}^{(n)}(U,X,S,Y)\},\\
E_4 &= \{\exists \hat{m}_Y\neq 1: (U^n(j,1),X^n(1,j,\hat{m}_Y,k),S^n,Y^n)\nonumber\\
& \ \ \ \ \ \in \mathcal{T}_{\epsilon}^{(n)}(U,X,S,Y)\},\\
E_5 &= \{\exists \hat{m}_Z\neq 1, \hat{m}_Y\neq 1:\nonumber\\ &(U^n(j,\hat{m}_Z),X^n(\hat{m}_Z,j,\hat{m}_Y,k),S^n,Y^n)\in \mathcal{T}_{\epsilon}^{(n)}(U,X,S,Y)\},\\
E_6 &=\{\exists \hat{m}_Z\neq 1: \nonumber\\
&(U^n(j,\hat{m}_Z),X^n(\hat{m}_Z,j,1,k),S^n,Y^n)\in \mathcal{T}_{\epsilon}^{(n)}(U,X,S,Y)\}.
\end{align}

We define the error events at Decoder Z:
\begin{align}
E_7 &= \{\forall j\in\{1,2,...,2^{n\tilde{R}_Z}\}: (U^n(j,1),Z^n)\notin \mathcal{T}_{\epsilon}^{(n)}(U,Z)\},\\
E_8 &= \{\exists \hat{m}_Z\neq 1: m_Z \in\{1,2,...,2^{n(R_Z-C_{12})}\}\nonumber\\
 & \ \ \ \ \ (U^n(j,\hat{m}_Z),Z^n)\in \mathcal{T}_{\epsilon}^{(n)}(U,Z)\}.
\end{align}
Then, by the union bound:
% &= \Pr(E_1\cup E_2\cup E_3\cup E_4\cup E_5\cup E_6\cup E_7\cup E_8)\\
\begin{align*}
P_e^{(n)} &\leq P(E_1)+P(E_2\cap E_1^c)+P(E_3\cap (E_1^c\cup E_2^c))+P(E_4)\nonumber\\
&+P(E_5)+P(E_6)+P(E_7\cap (E_1^c\cup E_2^c))+P(E_8).
\end{align*}
Now, consider:
\begin{enumerate}
  \item For the encoder error, using the Covering Lemma \cite{ElGammalKim10LectureNotes}, $P(E_1)$ tends to zero as $n\rightarrow \infty$ if in each bin associated with $m_Z$ we have more than $I(U;S)+\delta(\epsilon)$ codewords, i.e., $\tilde{R}_Z>I(U;S)+\delta(\epsilon)$.
  \item For the second term, we have that $(U^n,S^n)\in\mathcal{T}_{\epsilon}^{(n)}(U,S)$ and $X^n$ is generated i.i.d. $\sim\prod_{i=1}^n p(x_i|u_i)$. Hence, using the Covering Lemma, we have that $P(E_2\cap E_1^c)$ tends to zero as $n\rightarrow \infty$ if in each bin associated with $m_Y$ we have more than $I(X;S|U)+\delta(\epsilon)$ codewords, i.e., $\tilde{R}_Y>I(X;S|U)+\delta(\epsilon)$.
  \item For the third term, note that $(X^n, U^n, S^n)\in\mathcal{T}_{\epsilon'}^{(n)}(U,S,X)$. Furthermore, $Y^n$ is generated i.i.d. $\sim\prod_{i=1}^n p(y_i|x_i,s_i)$ and $\epsilon>\epsilon'$. Therefore, by the Conditional Typicality Lemma \cite{ElGammalKim10LectureNotes}, $P(E_3\cap (E_1^c\cup E_2^c))$ tends to zero as $n\rightarrow \infty$.
  \item For the fourth term, note that if $\hat{m}_Y\neq 1$ then for any $j\in\{1,2,...,2^{n\tilde{R}_Z}\}$ and any $k\in\{1,2,...,2^{n\tilde{R}_Y}\}$, $X^n(1,j,\hat{m}_Y,k)$ is conditionally independent of $(X^n(1,j,1,k),S^n,Y^n)$ given $U^n(1,j)$ and is distributed according to $\sim\prod_{i=1}^n p(x_i|u_i(1,j))$. Hence, by the Packing Lemma \cite{ElGammalKim10LectureNotes}, $P(E_4)$ tends to zero as $n\rightarrow \infty$ if $R_Y+\tilde{R}_Y< I(X;S,Y|U)-\delta(\epsilon)$.
  \item For the fifth term, note that for any $\hat{m}_Z\neq 1$, any $\hat{m}_Y\neq 1$, any $j\in\{1,2,...,2^{n\tilde{R}_Z}\}$ and any $k\in\{1,2,...,2^{n\tilde{R}_Y}\}$, $(U^n(\hat{m}_Z,j),X^n(\hat{m}_Z,j,\hat{m}_Y,k))$ are conditionally independent of $(U^n(1,j),X^n(1,j,1,k),S^n,Y^n)$. Hence, by the Packing Lemma \cite{ElGammalKim10LectureNotes}, $P(E_5)$ tends to zero as $m\rightarrow \infty$ if $R_Z+\tilde{R}_Z+R_Y+\tilde{R}_Y<I(U,X;Y,S)-\delta(\epsilon)$. This bound, in addition the the bounds on $\tilde{R}_Z$ and $\tilde{R}_Y$, gives us
      \begin{align*}
      R_Z+R_Y&<I(U,X;Y,S)-\tilde{R}_Z-\tilde{R}_Y-\delta(\epsilon)\\
             &<I(U,X;Y,S)-I(U;S)-I(X;S|U)-3\delta(\epsilon)\\
             %&=I(U,X;Y|S)+I(U,X;S)-I(U;S)-I(X;S|U)-3\delta(\epsilon)\\
             &=I(U,X;Y|S)-3\delta(\epsilon)\\
             &=I(X;Y|S)-3\delta(\epsilon).
      \end{align*}
  \item For the sixth term, by the same considerations as for the previous event, by the Packing Lemma we have $R_Z-\tilde{R}_Z<I(U,X;Y,S)-\delta(\epsilon)$ (which is already satisfied).
  \item For the seventh term, $(X^n, U^n, S^n)\in\mathcal{T}_{\epsilon}^{(n)}(U,S,X)$. In addition, $Y^n$ is generated i.i.d. $\sim\prod_{i=1}^n p(y_i|x_i,s_i)$,  $Z^n$ is generated $\sim\prod_{i=1}^n p(z_i|y_i)=\prod_{i=1}^n p(z_i|y_i,x_i,s_i,u_i)$ and $\epsilon>\epsilon'$. Hence, by the Conditional Typicality Lemma \cite{ElGammalKim10LectureNotes} $P(E_7\cap (E_1^c\cup E_2^c))$ tends to zero as $n\rightarrow \infty$.
  \item For the eighth term,  note that for any $\hat{m}_Z\neq 1$ and any $j\in\{1,2,...,2^{n\tilde{R}_Z}\}$, $U^n(\hat{m}_Z,j)$ is conditionally independent of $(U^n(1,j),Z^n)$. Hence, by the Packing Lemma \cite{ElGammalKim10LectureNotes}, $P(E_8)$ tend to zero as $m\rightarrow \infty$ if the number of codewords in each superbin is less than $I(U;Z)$, i.e., $R_Z-C_{12}+\tilde{R}_Z<I(U;Z)-\delta(\epsilon)$.
\end{enumerate}

\par Combining the results, we have shown that $P(E)\rightarrow 0$ as $n\rightarrow \infty$ if
\begin{eqnarray}
R_Z&\leq& I(U;Z)-I(U;S)+C_{12}\nonumber\\
R_Y&\leq& I(X;Y|U,S)\nonumber\\
R_Z+R_Y&\leq& I(X;Y|S)\nonumber.
\end{eqnarray}
\par The above bound shows that the average probability of error, which, by symmetry, is equal to the probability for an individual pair of codewords, $(m_Z,m_Y)$, averaged over all choices of code-books in the random code construction, is arbitrarily small. Hence, there exists at least one code, $((2^{nR_Z},2^{nR_Y},2^{nR_{12}}),n)$, with an arbitrarily small probability of error.
\end{proof}

\subsection{Converse Proof of Theorem \ref{Theorem1}}\label{Converse}
In the previous section, we  proved the achievability part of  Theorem \ref{Theorem1}. In this section, we provide the upper bound on the capacity region of the PDSD BC, i.e., we give the proof of the converse for  Theorem \ref{Theorem1}.
\begin{proof}
Given an achievable rate trippet, $(R_Y,R_Z,C_{12})$, we need to show that there exists a joint distribution of the form (\ref{distribution1}), $P_{S}P_{U|S}P_{X|S,U}P_{Y|X,S}P_{Z|Y}$, such that
\begin{eqnarray}
R_Z&\leq& I(U;Z)-I(U;S)+C_{12}\nonumber\\
R_Y&\leq& I(X;Y|U,S)\nonumber\\
R_Z+R_Y&\leq& I(X;Y|S)\nonumber.
\end{eqnarray}
Since $(R_Y,R_Z,C_{12})$ is an achievable rate triplet, there exists a code, $(n, 2^{nR_Z}, 2^{nR_Y}, 2^{nC_{12}})$, with a probability
of error, $P^{(n)}_e$, that is arbitrarily small. By Fano's inequality,
\begin{equation}
H(M_Y|Y^n,S^n)\leq n(R_Y)P^{(n)}_{e,1}+H(P^{(n)}_{e,1})\triangleq \epsilon_{n_1},
\end{equation}
\begin{equation}
H(M_Z|Z^n,M_{12})\leq n(R_Z)P^{(n)}_{e,2}+H(P^{(n)}_{e,2})\triangleq \epsilon_{n_2},
\end{equation}
and let
\begin{equation}
\epsilon_{n_1}+\epsilon_{n_2}\triangleq \epsilon_n.
\end{equation}
Furthermore,
 \begin{eqnarray}
H(M_Y|M_Z,Y^n,S^n,Z^n)\leq H(M_Y|Y^n,S^n)\leq \epsilon_{n_1},\\
H(M_Z|Y^n,Z^n,S^n)\leq H(M_Z|Z^n,M_{12}(Y^n,S^n))\leq\epsilon_{n_2}.
\end{eqnarray}
Thus, can say that $\epsilon_n \rightarrow 0$ as $P^{(n)}_e \rightarrow 0$.

To bound the rate $R_Z$ consider:

\begin{align}
	nR_Z &= H(M_Z)\nonumber\\
	& =H(M_Z)-H(M_Z|Z^n,M_{12})+H(M_Z|Z^n,M_{12})\nonumber\\
	& \stackrel{(a)}{\leq}I(M_Z;Z^n,M_{12})+n\epsilon_n\nonumber\\
    & {=}I(M_Z;Z^n)+I(M_Z;M_{12}|Z^n)+n\epsilon_n\nonumber\\
    & \stackrel{(b)}{\leq}I(M_Z;Z^n)+H(M_{12})+n\epsilon_n\nonumber\\
    & \stackrel{(c)}{\leq}I(M_Z;Z^n)+C_{12}+n\epsilon_n\nonumber\\
    & {=} \sum_{i=1}^n I(M_Z;Z_i|Z^{i-1})+C_{12}+n\epsilon_n\nonumber\\
    & {\leq} \sum_{i=1}^n I(M_Z,Z^{i-1};Z_i)+C_{12}+n\epsilon_n\nonumber\\
    & {\leq} \sum_{i=1}^n I(M_Z,Z^{i-1},Y^{i-1};Z_i)+C_{12}+n\epsilon_n\nonumber\\
    & \stackrel{(d)}{=} \sum_{i=1}^n I(M_Z,Y^{i-1};Z_i)+C_{12}+n\epsilon_n\nonumber\\
    & {=} \sum_{i=1}^n I(M_Z,Y^{i-1},S_{i+1}^n;Z_i)-I(S_{i+1}^n;Z_i|M_Z,Y^{i-1})+C_{12}+n\epsilon_n\nonumber\\
    & \stackrel{(e)}{=} \sum_{i=1}^n I(M_Z,Y^{i-1},S_{i+1}^n;Z_i)-I(S_i;Y^{i-1}|M_Z,S_{i+1}^n)+C_{12}+n\epsilon_n\nonumber\\
    & \stackrel{(f)}{=} \sum_{i=1}^n I(M_Z,Y^{i-1},S_{i+1}^n;Z_i)-I(S_i;Y^{i-1},M_Z,S_{i+1}^n)+C_{12}+n\epsilon_n\nonumber\\
    & \stackrel{(g)}{=} \sum_{i=1}^n I(U_i;Z_i)-I(S_i;U_i)+C_{12}+n\epsilon_n\label{R2bound}
\end{align}
where\\
$(a)$ follows from Fano's inequality,\\
$(b)$ follows from the fact that conditioning reduces entropy,\\
$(c)$ follows from the admissibility of the conference,\\
$(d)$ follows from the physical degradedness properties of the channel,\\
$(e)$ follows from the Csisz아r sum identity,\\
$(f)$ follows from the fact that $S_i$ is independent of $(M_Z,S_{i+1}^n)$,\\
$(g)$ follows from the choice of $U_i = (M_Z,S_{i+1}^n,Y^{i-1})$.

\par Hence, we have:
\begin{equation}
R_Z\leq \frac{1}{n}\sum_{i=1}^n [I(U_i;Z_i)-I(S_i;U_i)]+C_{12}+\epsilon_n. \label{eq2}
\end{equation}
Next, to bound the rate $R_Y$ consider:
\begin{align}
	nR_Y &= H(M_Y)\nonumber\\
    &\stackrel{(a)}{=} H(M_Y|M_Z,S^n)\nonumber\\
	& =H(M_Y|M_Z,S^n)-H(M_Y|M_Z,S^n,Y^n)+H(M_Y|M_Z,S^n,Y^n)\nonumber\\
	& \stackrel{(b)}{\leq}I(M_Y;Y^n|M_Z,S^n)+n\epsilon_n\nonumber\\
    & \stackrel{(c)}{=}I(M_Y,X^n(M_Y,M_Z,S^n);Y^n|M_Z,S^n)+n\epsilon_n\nonumber\\
    & {=} \sum_{i=1}^n I(M_Y,X^n;Y_i|M_Z,S^n,Y^{i-1})+n\epsilon_n\nonumber\\
    & {=} \sum_{i=1}^n H(Y_i|M_Z,S^n,Y^{i-1})-H(Y_i|M_Z,S^n,Y^{i-1},M_Y,X^n)+n\epsilon_n\nonumber\\
    & \stackrel{(d)}{\leq} \sum_{i=1}^n H(Y_i|M_Z,S_i,S_{i+1}^n,Y^{i-1})-H(Y_i|M_Z,S^n,Y^{i-1},M_Y,X^n)+n\epsilon_n\nonumber\\
    & \stackrel{(e)}{\leq} \sum_{i=1}^n H(Y_i|M_Z,S_i,S_{i+1}^n,Y^{i-1})-H(Y_i|M_Z,S_i,S_{i+1}^n,Y^{i-1},X_i)+n\epsilon_n\nonumber\\
    & {=} \sum_{i=1}^n I(Y_i;X_i|M_Z,S_i,S_{i+1}^n,Y^{i-1})+n\epsilon_n\nonumber\\
    & \stackrel{(f)}{=} \sum_{i=1}^n I(Y_i;X_i|S_i,U_i)+n\epsilon_n\nonumber\\
\end{align}
where\\
$(a)$ follows from the fact that $M_Y$ is independent of $(M_Z,S^n)$,\\
$(b)$ follows from Fano's inequality,\\
$(c)$ follows from the fact that $X^n$ is a deterministic function of $(M_Z,M_Y,S^n)$,\\
$(d)$ follows from the fact that conditioning reduces entropy,\\
$(e)$ follows from the properties of the channel,\\
$(f)$ follows from the choice of $U_i = (M_Z,S_{i+1}^n,Y^{i-1})$.

\par Hence, we have:
\begin{equation}
R_Y\leq \frac{1}{n}\sum_{i=1}^n I(Y_i;X_i|S_i,U_i)+\epsilon_n. \label{eq2}
\end{equation}

\par To bound the sum of rates, $R_Z+R_Y$, consider:
\begin{align*}
	n(R_Z+R_Y) &= H(M_Z,M_Y)\\
    &\stackrel{(a)}{=} H(M_Z,M_Y|S^n)\\
    & =H(M_Z,M_Y|S^n)+H(M_Z,M_Y|Y^n,Z^n,S^n)-H(M_Z,M_Y|Y^n,Z^n,S^n) \\
    & \stackrel{(b)}{\leq}I(M_Z,M_Y;Y^n,Z^n|S^n)+n\epsilon_n\nonumber\\
    & \stackrel{(c)}{=}I(M_Z,M_Y;Y^n|S^n)+n\epsilon_n\nonumber\\
    & \stackrel{(d)}{=}I(M_Y,M_Z,X^n(M_Y,M_Z,S^n);Y^n|S^n)+n\epsilon_n\nonumber\\
    & {=} \sum_{i=1}^n I(M_Y,M_Z,X^n;Y_i|S^n,Y^{i-1})+n\epsilon_n\nonumber\\
    & {=} \sum_{i=1}^n H(Y_i|S^n,Y^{i-1})-H(Y_i|S^n,Y^{i-1},X^n,M_Y,M_Z)+n\epsilon_n\nonumber\\
    & \stackrel{(e)}{\leq} \sum_{i=1}^n H(Y_i)-H(Y_i|S^n,Y^{i-1},X^n,M_Y,M_Z)+n\epsilon_n\nonumber\\
    & \stackrel{(f)}{\leq} \sum_{i=1}^n H(Y_i)-H(Y_i|S_i,X_i)+n\epsilon_n\nonumber\\
    & {=} \sum_{i=1}^n I(Y_i;X_i|S_i)+n\epsilon_n\nonumber\\
\end{align*}

where\\
$(a)$ follows from the fact that $(M_Z,M_Y)$ are independent of $S^n$,\\
$(b)$ follows from Fano's inequality,\\
$(c)$ follows from the physical degradedness and memorylessness of the channel,\\
$(d)$ follows from the fact that $X^n$ is a deterministic function of $(M_Z,M_Y,S^n)$,\\
$(e)$ follows from the fact that conditioning reduces entropy,\\
$(f)$ follows from the properties of the channel.

\par Hence, we have:
\begin{equation}
R_Z+R_Y\leq \frac{1}{n}\sum_{i=1}^n I(Y_i;X_i|S_i)+\epsilon_n. \label{eq1}
\end{equation}

\par We complete the proof by using standard time-sharing arguments to obtain the rate bounds terms given in (\ref{CapacityRegion1}).
\end{proof}

                    %%%%%%%%%%%%%%%%%%%%%%%%%%%%%%%%%%%%%%%%%%%%%%%%%%%%%%%%%%%%%%%%%%%%%%%%%%%%%%%%%%%%%%%%%%%%%%%%%%%%%%%%%%%%%%%%%%%%%%%%%%%%%%%%
                    %%%%%%%%%%%%%%%%%%%%%%%%%%%%%%%%%%%%%%%%%%%%%%%%%%%%%%%%%%%%%%%%%%%%%%%%%%%%%%%%%%%%%%%%%%%%%%%%%%%%%%%%%%%%%%%%%%%%%%%%%%%%%%%%
                    %%%%%%%%%%%%%%%%%%%%%%%%%%%%%%%%%%%%%%%%%%%                                     %%%%%%%%%%%%%%%%%%%%%%%%%%%%%%%%%%%%%%%%%%%%%%%%
                    %%%%%%%%%%%%%%%%%%%%%%%%%%%%%%%%%%%%%%%%%%%           Problem Definition        %%%%%%%%%%%%%%%%%%%%%%%%%%%%%%%%%%%%%%%%%%%%%%%%
                    %%%%%%%%%%%%%%%%%%%%%%%%%%%%%%%%%%%%%%%%%%%                                     %%%%%%%%%%%%%%%%%%%%%%%%%%%%%%%%%%%%%%%%%%%%%%%%
                    %%%%%%%%%%%%%%%%%%%%%%%%%%%%%%%%%%%%%%%%%%%%%%%%%%%%%%%%%%%%%%%%%%%%%%%%%%%%%%%%%%%%%%%%%%%%%%%%%%%%%%%%%%%%%%%%%%%%%%%%%%%%%%%%

\subsection{Proof of Theorem \ref{Theorem2}}\label{SectionCausalProof}                    %%%%%%%%%%%%%%%%%%%%%%%%%%%%%%%%%%%%%%%%%%%%%%%%%%%%%%%%%%%%%%%%%%%%%%%%%%%%%%%%%%%%%%%%%%%%%%%%%%%%%%%%%%%%%%%%%%%%%%%%%%%%%%%%
\begin{proof}
The proof of Theorem \ref{Theorem2} follows closely the proof of the noncausal case. Therefore, the proof is not given in full detail. Instead, we rely on the guidelines of the proof for the noncausal case and emphasize the differences when considering the causal scenario.
\par For the proof of achievability, fixing a distribution of the form (\ref{distribution2}), we generate codewords $u^n(m_Z)$ i.i.d. $\sim\prod_{i=1}^n p(u_i)$. Next, we generate satellite codewords $v^n(m_Y,m_Z)$ i.i.d. $\sim\prod_{i=1}^n p(v_i|u_i)$ (instead of $x^n(m_Y,m_Z)$) around each cloud center $u^n(m_Z)$. Furthermore, the codewords $u^n$ are divided among $2^{nC_{12}}$ superbins. To send $(m_Y,m_Z)$, the encoder transmits $x_i(u_i(m_z),v_i(m_Z,m_Y),s_i)$ at time $i\in[1,n]$. For decoding, the strong decoder, Decoder Y, decodes the satellite codeword $v^n$ (and hence also the cloud center $u^n$), i.e., it looks for an $(\hat{m}_Y,\hat{m}_Z)$ such that $(u^n(\hat{m}_Z),v^n(\hat{m}_Z,\hat{m}_Y),y^n)\in \mathcal{T}_{\epsilon}^{(n)}(U,V,Y)$. Decoder Y then uses the link between the decoders to send Decoder Z the number of the superbin that contains $u^n$. Decoder Z now looks in this specific superbin for a unique $\hat{m}_Z$ such that $(u^n(\hat{m}_Z),z^n)\in \mathcal{T}_{\epsilon}^{(n)}(U,Z)$. Now, by LLN, the Conditional Typicality Lemma, the Packing Lemma \cite{ElGammalKim10LectureNotes}, and similar to the proof of achievability of Theorem \ref{Theorem1}, the probability of error tends to zero as $n\rightarrow \infty$ if
\begin{eqnarray*}
R_Z&\leq& I(U;Z)+C_{12}-\delta(\epsilon)\\
R_Y&\leq& I(V;Y|U)-\delta(\epsilon)\\
R_Z+R_Y&\leq& I(V,U;Y)-\delta(\epsilon).
\end{eqnarray*}
\par For the converse, we define the auxiliary random variables $U_i=(M_Z,Y^{i-1})$ and $V_i=M_Y$. Note that for this setting, this definition of $U_i$ and $V_i$ result in $(U_i,V_i)$ that are independent of $S_i$. Therefore, if we follow the same steps as in the converse of Theorem \ref{Theorem1}, the bound on $R_Z$ reduces to
\begin{equation}
R_Z\leq \frac{1}{n}\sum_{i=1}^n I(U_i;Z_i)+C_{12}+\epsilon_n.
\end{equation}
Next, to bound the rate $R_Y$ consider:
\begin{align}
	nR_Y &= H(M_Y)\nonumber\\
	& {\leq}I(M_Y;Y^n|M_Z)+n\epsilon_n\nonumber\\
    & {=} \sum_{i=1}^n I(M_Y;Y_i|M_Z,Y^{i-1})+n\epsilon_n\nonumber\\
    & {=} \sum_{i=1}^n I(V_i;Y_i|U_i)+n\epsilon_n\nonumber.
\end{align}
\par Hence, we have:
\begin{equation}
R_Y\leq \frac{1}{n}\sum_{i=1}^n I(V_i;Y_i|U_i)+\epsilon_n. \label{eq2}
\end{equation}
Finally, to bound the sum of rates, $R_Z+R_Y$, consider:
\begin{align*}
	n(R_Z+R_Y) &= H(M_Z,M_Y)\\
    & {\leq}I(M_Z,M_Y;Y^n)+n\epsilon_n\nonumber\\
    & {=} \sum_{i=1}^n I(M_Y,M_Z;Y_i|Y^{i-1})+n\epsilon_n\nonumber\\
    & {\leq} \sum_{i=1}^n I(M_Y,M_Z,Y^{i-1};Y_i)+n\epsilon_n\nonumber\\
    & {=} \sum_{i=1}^n I(V_i,U_i;Y_i)+n\epsilon_n\nonumber.
\end{align*}
\par Hence, we have:
\begin{equation}
R_Z+R_Y\leq \frac{1}{n}\sum_{i=1}^n I(V_i,U_i;Y_i)+\epsilon_n. \label{eq2}
\end{equation}
We complete the proof by using standard time-sharing arguments; hence, the details are omitted.
\end{proof}

\subsection{Proof of Achievability of Theorem \ref{Theorem3}}\label{Achievability2}
\par Let us prove achievability of the region given in Theorem \ref{Theorem3}.

\begin{proof}
\par Fix a joint distribution of the form (\ref{distribution3}), $P_{S,S_d,U,X,Z,Y} = P_{S}P_{S_d,U,X|S}P_{Y,Z|X,S}$, where $P_{Y,Z|X,S}=P_{Y|X,S}P_{Z|Y}$ is given by the channel.

\par{\it Code Construction}:
First, we start by generating the codebook of the State Encoder. Randomly and independently generate $2^{n\tilde{R}_{12}}$ sequences $s_d^n(l)$, $l\in[1,2^{n\tilde{R}_{12}}]$ i.i.d. $\sim \prod_{i=1}^n p(s_{d,i})$. Partition the codewords, $s_d^n(l)$, among $2^{nC_{12}}$ bins in their natural ordering such that each bin $B(t)$, $t\in[1,2^{nC_{12}}]$ contains the codewords associated with the index $l\in[(t-1)2^{n(\tilde{R}_{12}-C_{12})}+1,t2^{n(\tilde{R}_{12}-C_{12})}]$. Reveal the codebook to the Channel Encoder, Decoder Y and Decoder Z.
\par Second, we create the codebook for the Channel encoder. Generate $2^{nR_Z}$ bins, $B(m_Z)$, $m_Z\in[1,2^{nR_Z}]$. In each bin generate $2^{n\tilde{R}_Z}$ codewords $u^n(j,m_Z)$, $j\in[1,2^{n\tilde{R}_Z}]$ i.i.d. $\sim \prod_{i=1}^n p(u_i)$. Third, for each codeword $u^n(j,m_Z)$  generate $2^{nR_Y}$ satellite bins. In each satellite bin generate $2^{n\tilde{R}_Y}$ codewords $x^n(m_Z,j,m_Y,k)$, where $k\in[1,2^{n\tilde{R}_Y}]$, i.i.d. $\sim\prod_{i=1}^n p(x_i(m_Z,j,m_Y,k)|u_i(m_Z,j))$.

%\par{\it Code Construction}:First, we start by generating the codebook of the State Encoder. Randomly and independently generate $2^{n\tilde{R}_d}$ sequences $s_d^n(l)$, $l\in[1,2^{n\tilde{R}_d}]$ i.i.d. $\sim \prod_{i=1}^n p(s_{d,i})$. Partition the codewords, $s_d^n(l)$, amongst $2^{nC_{12}}$ bins in their natural ordering such that each bin $B(t)$, $t\in[1,2^{nC_{12}}]$ contains the codewords associated with the index $l\in[(t-1)2^{n(\tilde{R}-R)}+1,t2^{n(\tilde{R}-R)}]$. Reveal the codebook to the Channel Encoder and Decoder Z.
%\par Second, we create the codebook for the Channel encoder. Generate $2^{nR_Z}$ bins, $B(m_Z)$, $m_Z\in[1,2^{nR_Z}]$. In each bin generate $2^{n\tilde{R}_Z}$ codewords $u^n(j,m_Z)$, $j\in[1,2^{n\tilde{R}_Z}]$ i.i.d. $\sim \prod_{i=1}^n p(u_i)$. Third, for each codeword $u^n(j,m_Z)$ and $s_d^n(l)$ generate $2^{nR_Y}$ codewords $x^n(m_Y,j,m_Z,l)$, $m_Y\in[1,2^{nR_Y}]$ i.i.d. $\sim \prod_{i=1}^n p(x_i|u_i(j,m_Z),s_{d,i}(l))$.

\par {\it Encoding}:
\begin{enumerate}
  \item {\it State Encoder}: Given $s^n$, the State Encoder finds an index $l$ such that
    \begin{eqnarray}
    (s_d^n(l),s^n)\in \mathcal{T}_{\epsilon'}^{(n)}(S_d,S).
    \end{eqnarray}
    If there is more than one such index, choose the one for which $l$ is of the smallest lexicographical order. If there is no such index, select an index at random from the bin $B(t)$. The State Encoder sends the bin index $t$.
  \item {\it Channel Encoder}: First, note that the Channel Encoder knows the sequence transmitted from the State Encoder, $s_d^n(l)$, since it knows both $s^n$ and the State Encoder's strategy. To transmit $(m_Y,m_Z)$, the encoder first looks in the bin associated with the message $m_Z$ for a codeword $u^n(j,m_Z)$ such that it is jointly typical with the state sequence, $s^n$, and the codeword $s_d^n(l)$, i.e.
      \begin{eqnarray}
       (u^n(j,m_Z),s^n,s_d^n(l))\in \mathcal{T}_{\epsilon'}^{(n)}(U,S,S_d).
      \end{eqnarray}
      If there is more than one such index, choose the one for which $j$ is of the smallest lexicographical order. If there is no such index, choose an arbitrary $u^n$ from the bin $m_Z$ (in such a case the decoder will declare an error). Next, the encoder looks for a sequence $x^n(m_Z,j,m_Y,k)$ (where $j$ was chosen in the first stage) such that it is jointly typical with the state sequence, $s^n$, the codeword $s_d^n(l)$ and the codeword $u^n(m_Z,j)$, i.e.,
        \begin{equation}
        (x^n(m_Z,j,m_Y,k),u^n(m_Z,j),s_d^n(l),s^n)\in \mathcal{T}_{\epsilon'}^{(n)}(X,U,S,S_d).
        \end{equation}
        If such a codeword, $x^n$, does not exist, choose an arbitrary $x^n$ from the bin $m_Y$ (in such a case the decoder will declare an error). If there is more than one such codeword, choose the one for which $k$ is of the smallest lexicographical order.

\end{enumerate}

\par {\it Decoding}:
\begin{enumerate}
  \item Let $\epsilon>\epsilon'$. Note that Decoder Y knows both the sequence $s_d^n$ and $s^n$. Since Decoder Y knows the sequence $s^n$ and the State Encoder's strategy, it also knows $s_d^n$ (similar to the Channel Encoder). Hence, it looks for the smallest values of $(\hat{m}_Y,\hat{m}_Z)$ for which there exists a $\hat{j}$ such that
  \begin{equation}
    (u^n(\hat{m}_Z,\hat{j})),x^n(\hat{m}_Z,\hat{j},\hat{m}_Y,\hat{k}),s_d^n,s^n,y^n)\in \mathcal{T}_{\epsilon}^{(n)}(U,X,S_d,S,Y).
  \end{equation}
    If no triplet or more than one such triplet is found, an error is declared.
  \item Decoder Z first looks for the unique index $\hat{l}\in B(t)$ such that
  \begin{equation}
    (s_d^n(\hat{l}),z^n)\in \mathcal{T}_{\epsilon}^{(n)}(S_d,Z).
  \end{equation}
  \item Once Decoder Z has decoded $s_d^n(\hat{l})$, it uses $s_d^n(\hat{l})$ as side information to help the next decoding stage. Hence, the second step is to look for the smallest value of $\hat{m}_Z$  for which there exists a $\hat{j}$ such that
  \begin{equation}
    (u^n(\hat{m}_Z,\hat{j})),z^n,s_d(\hat{l}))\in \mathcal{T}_{\epsilon}^{(n)}(U,Z,S_d).
\end{equation}
If no pair or more than one such pair is found, an error is declared.
\end{enumerate}

\par{\bf Analysis of the probability of error:}\\
Without loss of generality, we can assume that messages $(m_Z,m_Y)=(1,1)$ were sent. We define the error event at the State Encoder:
\begin{eqnarray}
E_1 &=& \{\forall l\in[1,2^{n\tilde{R}_{12}}]: (S_d^n(l),S^n)\notin \mathcal{T}_{\epsilon'}^{(n)}(S_d,S)\}.
\end{eqnarray}
We define the error events at the Channel Encoder:
\begin{eqnarray}
E_2 &=& \{\forall j\in[1,2^{n\tilde{R}_Z}]: (U^n(1,j),S^n,S_d^n)\notin \mathcal{T}_{\epsilon'}^{(n)}(U,S,S_d)\},\\
E_3 &=& \{\forall k\in[1,2^{n\tilde{R}_Y}]: (X^n(1,j,1,k),U^n(1,j),S^n,S_d^n)\notin \mathcal{T}_{\epsilon'}^{(n)}(X,U,S,S_d)\}.
\end{eqnarray}
We define the error events at Decoder Y:
\begin{eqnarray}
E_4 &=& \{\forall j\in[1,2^{n\tilde{R}_Z}]: (U^n(j,1),X^n(1,j,1,k),S_d^n,S^n,Y^n)\notin \mathcal{T}_{\epsilon}^{(n)}(U,X,S_d,S,Y)\},\\
E_5 &=& \{\exists \hat{m}_Y\neq 1: (U^n(j,1),X^n(1,j,\hat{m}_Y,k),S_d^n,S^n,Y^n)\in \mathcal{T}_{\epsilon}^{(n)}(U,X,S_d,S,Y)\},\\
E_6 &=& \{\exists \hat{m}_Z\neq 1, \hat{m}_Y\neq 1: (U^n(j,\hat{m}_Z),X^n(\hat{m}_Z,j,\hat{m}_Y,k),S_d^n,S^n,Y^n)\in \mathcal{T}_{\epsilon}^{(n)}(U,X,S_d,S,Y)\},\\
E_7 &=&\{\exists \hat{m}_Z\neq 1: (U^n(j,\hat{m}_Z),X^n(\hat{m}_Z,j,1,k),S_d^n,S^n,Y^n)\in \mathcal{T}_{\epsilon}^{(n)}(U,X,S_d,S,Y)\}.
\end{eqnarray}
We define the error events at Decoder Z:
\begin{eqnarray}
E_8 &=& \{\forall l\in[1,2^{n\tilde{R}_{12}}]: (S_d^n(l),Z^n)\notin \mathcal{T}_{\epsilon}^{(n)}(S_d,Z)\},\\
E_9 &=& \{\exists \hat{l}\neq L, \hat{l}\in B(T):(S_d^n(\hat{l}),Z^n)\in \mathcal{T}_{\epsilon}^{(n)}(S_d,Z)\},\\
E_{10} &=& \{\forall j\in[1,2^{n\tilde{R}_Z}]: (U^n(j,1),Z^n,S_d^n(l))\notin \mathcal{T}_{\epsilon}^{(n)}(U,Z,S_d)\},\\
E_{11} &=& \{\exists \hat{m}_Z\neq 1, (U^n(j,\hat{m}_Z),Z^n,S_d^n(l))\in \mathcal{T}_{\epsilon}^{(n)}(U,Z,S_d)\}.
\end{eqnarray}
Then, by the union of events bound:
\begin{align*}
P_e^{(n)} \leq &P(E_1)+P(E_2\cap E_1^c)+P(E_3\cap (E_1^c\cup E_2^c))+P(E_4\cap (E_1^c\cup E_2^c\cup E_3^c))+P(E_5)+P(E_6)+P(E_7)\\
&+P(E_8\cap (E_1^c\cup E_2^c\cup E_3^c))+P(E_9)+P(E_{10}\cap (E_1^c\cup E_2^c\cup E_3^c))+P(E_{11}\cap E_1^c).
\end{align*}

Now, consider:
\begin{enumerate}
  \item For the error at the State Encoder, $E_1$, by invoking the Covering Lemma \cite{ElGammalKim10LectureNotes}, we obtain that $P(E_1)$ tends to zero as $n\rightarrow \infty$ if $\tilde{R}_{12}\geq I(S;S_d)+\delta(\epsilon')$.
  \item The probabilities of the errors $P(E_2\cap E_1^c),P(E_3\cap (E_1^c\cup E_2^c)),P(E_4\cap (E_1^c\cup E_2^c\cup E_3^c)),P(E_5),P(E_6)$ and $P(E_7)$ are treated in the exact same manner as in Section \ref{Achievability1}, where they are shown to tend to zero as $n\rightarrow \infty$ . Therefore the details are omitted.
  \item For the eighth term, we have that $(X^n, U^n, S^n, S_d^n)\in\mathcal{T}_{\epsilon}^{(n)}(X,U,S,S_d)$. In addition, $Y^n$ is generated i.i.d. $\sim\prod_{i=1}^n p(y_i|x_i,s_i)$, and $Z^n$ is generated $\sim\prod_{i=1}^n p(z_i|y_i)=\sim\prod_{i=1}^n p(z_i|y_i,x_i,s_i,u_i,s_{d,i})$. Since $\epsilon>\epsilon'$, by the Conditional Typicality Lemma \cite{ElGammalKim10LectureNotes}, $P(E_8\cap (E_1^c\cup E_2^c\cup E_3^c))$ tends to zero as $n\rightarrow\infty$.
  \item For the ninth error expression, $E_9$, we have that
        \begin{eqnarray}
        P(E_9)&=&P(\exists \hat{l}\neq L, \hat{l}\in B(T):(S_d^n(\hat{l}),Z^n)\in \mathcal{T}_{\epsilon}^{(n)}(S_d,Z))\nonumber\\
        &\leq& P(\exists  \hat{l}\in B(1):(S_d^n(\hat{l}),Z^n)\in \mathcal{T}_{\epsilon}^{(n)}(S_d,Z))\nonumber
        \end{eqnarray}
        \cite[{Lemma 11.1.}]{ElGammalKim10LectureNotes}. Therefore, since the sequence $S_d^n(\hat{l})$ is independent of $Z^n$, by the Packing Lemma \cite{ElGammalKim10LectureNotes}, $P(E_9)$ tends to zero as $n\rightarrow \infty$ if $\tilde{R}_{12}-C_{12}\leq I(Z;S_d)+\delta(\epsilon)$.
  \item For the tenth term, we again note that the random variables $(U,Z,S,X,S_d)$ are generated i.i.d.. Hence, since $\epsilon>\epsilon'$ and by the Conditional Typicality Lemma \cite{ElGammalKim10LectureNotes}, $P(E_{10}\cap (E_1^c\cup E_2^c\cup E_3^c))$ tends to zero as $m\rightarrow \infty$.
  \item For the eleventh term,  note that for any $\hat{m}_Z\neq 1$ and any $j\in[1,2^{n\tilde{R}_Z}]$, $U^n(\hat{m}_Z,j)$ is conditionally independent of $(U^n(1,j),Z^n,S_d^n)$. Hence, by the Packing Lemma \cite{ElGammalKim10LectureNotes}, $P(E_{11}\cap E_1^c)$ tend to zero as $m\rightarrow \infty$ if $R_Z-\tilde{R}_Z<I(U;Z,S_d)-\delta(\epsilon)$
\end{enumerate}

\par Combining the results, we have shown that $P(E)\rightarrow 0$ as $n\rightarrow \infty$ if
\begin{eqnarray}
R_Z&\leq& I(U;Z,S_d)-I(U;S,S_d)\nonumber\\
R_Y&\leq& I(X;S,S_d,Y|U)-I(X;S,S_d|U)\nonumber\\
R_Z+R_Y&\leq& I(U,X;S,S_d,Y)-I(U,X;S,S_d)\nonumber\\
C_{12}&\geq& I(S;S_d)-I(Z;S_d).\nonumber
\end{eqnarray}

\begin{remark}
Rearranging the expressions, we obtain
\begin{eqnarray}
R_Z&\leq& I(U;Z,S_d)-I(U;S,S_d)\label{RzBound1}\\
R_Y&\leq& I(X;Y|U,S,S_d)\label{RyBound1}\\
R_Z+R_Y&\leq& I(U,X;Y|S,S_d)\label{redundantBound1}\\
C_{12}&\geq& I(S;S_d)-I(Z;S_d).
\end{eqnarray}
Note that the bound on the rate sum, (\ref{redundantBound1}), is redundant and can be removed, since:
\begin{align*}
R_Z+R_Y&\leq I(U,X;Y|S,S_d)\\
       &= I(U;Y|S,S_d)+I(X;Y|U,S,S_d)\\
       &= I(U;Y,S,S_d)-I(U;S_d,S)+I(X;Y|U,S,S_d),
\end{align*}
in addition to $R_Y$ satisfying (\ref{RyBound1}) and $R_Z$ satisfying (\ref{RzBound1}), where (\ref{RzBound1}) can be bounded by
\begin{align*}
R_Z&\leq I(U;Z,S_d)-I(U;S,S_d)\\
   &\leq I(U;Y,S_d)-I(U;S,S_d)\\
   &\leq I(U;Y,S_d,S)-I(U;S,S_d).
\end{align*}
\end{remark}
\par The above bound shows that the average probability of error, which, by symmetry, is equal to the probability for an individual pair of codewords, $(m_Z,m_Y)$, averaged over all choices of code-books in the random code construction, is arbitrarily small. Hence, there exists at least one code, $((2^{nR_Z},2^{nR_Y},2^{nR_{12}}),n)$, with an arbitrarily small probability of error.
\end{proof}

\subsection{Converse Proof of Theorem \ref{Theorem3}}\label{Converse2}
In Section \ref{Achievability2}, the achievability Theorem \ref{Theorem3} was shown. To finish the proof, we provide the upper bound on the capacity region.
\begin{proof}
Given an achievable rate triplet $(C_{12},R_Z,R_Y)$, we need to show that there exists a joint distribution of the form (\ref{distribution3}), $P_{S}P_{S_d,U,X|S}P_{Y|X,S}P_{Z|Y}$, such that
\begin{eqnarray}
R_Z&\leq& I(U;Z|S_d)-I(U;S|S_d)\nonumber\\
R_Y&\leq& I(X;Y|U,S)\nonumber
\end{eqnarray}
and
\begin{eqnarray}
C_{12}&\geq& I(S;S_d)-I(Z;S_d).\nonumber
\end{eqnarray}
Since $(C_{12},R_Z,R_Y)$ is an achievable rate triplet, there exists a code, $((2^{nR_Z},2^{nR_Y}),2^{nC_{12}},n)$, with a probability
of error, $P^{(n)}_e$, that is arbitrarily small. By Fano's inequality,
\begin{equation}
H(M_Y|Y^n,S^n)\leq n(R_Y)P^{(n)}_{e,1}+H(P^{(n)}_{e,1})\triangleq \epsilon_{n_1},
\end{equation}
\begin{equation}
H(M_Z|Z^n,M_{12})\leq n(R_Z)P^{(n)}_{e,2}+H(P^{(n)}_{e,2})\triangleq \epsilon_{n_2},
\end{equation}
and let
\begin{equation}
\epsilon_{n_1}+\epsilon_{n_2}\triangleq \epsilon_n.
\end{equation}
Furthermore,
 \begin{eqnarray}
H(M_Y|M_Z,Y^n,S^n,Z^n)\leq H(M_Y|Y^n,S^n)\leq \epsilon_{n_1},\\
H(M_Z|Y^n,Z^n,S^n,M_{12})\leq H(M_Z|Z^n,M_{12}))\leq\epsilon_{n_2}.
\end{eqnarray}
Thus, we can say that $\epsilon_n \rightarrow 0$ as $P^{(n)}_e \rightarrow 0$.

For $C_{12}$ consider:

\begin{align}
	nC_{12} &\geq H(M_{12})\nonumber\\
	& {\geq}I(M_{12};S^n)\nonumber\\
    & {=} \sum_{i=1}^n I(M_{12};S_i|S^{i-1})\nonumber\\
    & \stackrel{(a)}{=}\sum_{i=1}^n I(M_{12},S^{i-1};S_i)\nonumber\\
    & {=} \sum_{i=1}^n I(M_{12},S^{i-1},Z_{i+1}^n;S_i)-I(Z_{i+1}^n;S_i|M_{12},S^{i-1})\nonumber\\
    & \stackrel{(b)}{=} \sum_{i=1}^n I(M_{12},S^{i-1},Z_{i+1}^n;S_i)-I(Z_i;S^{i-1}|M_{12},Z_{i+1}^n)\nonumber\\
    & {\geq} \sum_{i=1}^n I(M_{12},S^{i-1},Z_{i+1}^n;S_i)-I(Z_i;S^{i-1},M_{12},Z_{i+1}^n)\nonumber\\
    & \stackrel{(c)}{=} \sum_{i=1}^n I(S_{d,i};S_i)-I(Z_i;S_{d,i})\nonumber\label{R2bound}
\end{align}
where\\
$(a)$ follows since $S_i$ is independent of $S^{i-1}$,\\
$(b)$ follows from the Csisz아r sum identity,\\
$(c)$ follows from the definition of the auxiliary random variable, $S_d= (M_{12},S_{i+1}^n,Z^{i-1})$.

\par Hence, we have:
\begin{equation}
C_{12}\leq \frac{1}{n}\sum_{i=1}^n I(S_{d,i};S_i)-I(Z_i;S_{d,i}). \label{eq4}
\end{equation}

To bound the rate $R_Z$ consider:

\begin{align}
	nR_Z &= H(M_Z)\nonumber\\
    &= H(M_Z|M_{12})\nonumber\\
	& =H(M_Z|M_{12})-H(M_Z|Z^n,M_{12})+H(M_Z|Z^n,M_{12})\nonumber\\
	& \stackrel{(a)}{\leq}I(M_Z;Z^n|M_{12})+n\epsilon_n\nonumber\\
    & {=} \sum_{i=1}^n I(M_Z;Z_i|M_{12},Z^{i-1})+n\epsilon_n\nonumber\\
    & {=} \sum_{i=1}^n I(M_Z,S_{i+1}^n;Z_i|M_{12},Z^{i-1})-I(S_{i+1}^n;Z_i|M_Z,M_{12},Z^{i-1})+n\epsilon_n\nonumber\\
    & \stackrel{(b)}{=} \sum_{i=1}^n I(M_Z,S_{i+1}^n;Z_i|M_{12},Z^{i-1})-I(S_i;Z^{i-1}|M_Z,M_{12},S_{i+1}^n)+n\epsilon_n\nonumber\\
    & \stackrel{(c)}{=} \sum_{i=1}^n I(M_Z,S_{i+1}^n;Z_i|M_{12},Z^{i-1})-I(S_i;Z^{i-1},M_Z|M_{12},S_{i+1}^n)+n\epsilon_n\nonumber\\
    & \stackrel{(d)}{=} \sum_{i=1}^n I(S_{i+1}^n;Z_i|M_{12},Z^{i-1})+ I(M_Z,;Z_i|M_{12},Z^{i-1},S_{i+1}^n)-I(S_i;Z^{i-1},M_Z|M_{12},S_{i+1}^n)+n\epsilon_n\nonumber\\
    & \stackrel{(e)}{=} \sum_{i=1}^n I(S_i;Z^{i-1}|M_{12},S_{i+1}^n)+ I(M_Z,;Z_i|M_{12},Z^{i-1},S_{i+1}^n)-I(S_i;Z^{i-1},M_Z|M_{12},S_{i+1}^n)+n\epsilon_n\nonumber\\
    & {=} \sum_{i=1}^n I(M_Z,;Z_i|M_{12},Z^{i-1},S_{i+1}^n)-I(S_i;M_Z|Z^{i-1},M_{12},S_{i+1}^n)+n\epsilon_n\nonumber\\
    & \stackrel{(f)}{=} \sum_{i=1}^n I(U_i;Z_i|S_{d,i})-I(S_i;U_i|S_{d,i})+n\epsilon_n\label{R2bound}
\end{align}
where\\
$(a)$ follows from Fano's inequality,\\
$(b)$ follows from the Csisz아r sum identity,\\
$(c)$ follows from the fact that $S_i$ is independent of $M_Z$ given $(M_Z,M_{12},S_{i+1}^n)$,\\
$(d)$ follows from using the chain rule,\\
$(e)$ follows from using the Csisz아r sum identity,\\
$(f)$ follows from the definition of the auxiliary random variables $S_d= (M_{12},S_{i+1}^n,Z^{i-1})$ and $U_i = M_Z$.

\par Hence, we have:
\begin{equation}
R_Z\leq \frac{1}{n}\sum_{i=1}^n [I(U_i;Z_i|S_d)-I(S_i;U_i|S_d)]+C_{12}+\epsilon_n. \label{eq5}
\end{equation}
Next, to bound the rate $R_Y$ consider:
\begin{align}
	nR_Y &= H(M_Y)\nonumber\\
    &\stackrel{(a)}{=} H(M_Y|M_Z,S^n)\nonumber\\
	& =H(M_Y|M_Z,S^n)-H(M_Y|M_Z,S^n,Y^n)+H(M_Y|M_Z,S^n,Y^n)\nonumber\\
	& \stackrel{(b)}{\leq}I(M_Y;Y^n|M_Z,S^n)+n\epsilon_n\nonumber\\
    & \stackrel{(c)}{=}I(M_Y,X^n(M_Y,M_Z,S^n);Y^n|M_Z,S^n)+n\epsilon_n\nonumber\\
    & {=} \sum_{i=1}^n I(M_Y,X^n;Y_i|M_Z,S^n,Y^{i-1})+n\epsilon_n\nonumber\\
    & \stackrel{(d)}{=} \sum_{i=1}^n I(M_Y,X^n;Y_i|M_Z,S^n,M_{12},Y^{i-1})+n\epsilon_n\nonumber\\
    & \stackrel{(e)}{=} \sum_{i=1}^n I(M_Y,X^n;Y_i|M_Z,S^n,M_{12},Y^{i-1},Z^{i-1})+n\epsilon_n\nonumber\\
    & {=} \sum_{i=1}^n H(Y_i|M_Z,S^n,M_{12},Y^{i-1},Z^{i-1})-H(Y_i|M_Z,S^n,M_{12},Y^{i-1},Z^{i-1},M_Y,X^n)+n\epsilon_n\nonumber\\
    & \stackrel{(f)}{\leq} \sum_{i=1}^n H(Y_i|M_Z,S_i,S_{d,i})-H(Y_i|M_Z,S^n,M_{12},Y^{i-1},Z^{i-1},M_Y,X^n)+n\epsilon_n\nonumber\\
    & \stackrel{(g)}{\leq} \sum_{i=1}^n H(Y_i|M_Z,S_i,S_{d,i})-H(Y_i|M_Z,S_i,S_{d,i},X_i)+n\epsilon_n\nonumber\\
    & {=} \sum_{i=1}^n I(Y_i;X_i|M_Z,S_{d,i},S_i)+n\epsilon_n\nonumber\\
    & \stackrel{(h)}{=} \sum_{i=1}^n I(Y_i;X_i|U_i,S_{d,i},S_i)+n\epsilon_n\nonumber\\
\end{align}
where\\
$(a)$ follows from the fact that $M_Y$ is independent of $(M_Z,S^n)$,\\
$(b)$ follows from Fano's inequality,\\
$(c)$ follows from the fact that $X^n$ is a deterministic function of $(M_Z,M_Y,S^n)$,\\
$(d)$ follows from the fact that $M_{12}$ is a function of $S^n$,\\
$(e)$ follows from the degradedness property of the channel,\\
$(f)$ follows from the fact that conditioning reduces entropy and the definition of the auxiliary random variable $S_d= (M_{12},S_{i+1}^n,Z^{i-1})$,\\
$(g)$ follows from the properties of the channel,\\
$(h)$ follows from the choice of $U_i = M_Z.$
\par Hence, we have:
\begin{equation}
R_Y\leq \frac{1}{n}\sum_{i=1}^n I(Y_i;X_i|S_i,U_i)+\epsilon_n. \label{eq6}
\end{equation}

\par We complete the proof by using standard time-sharing arguments to obtain the rate bounds terms given in Theorem \ref{Theorem3}.
\end{proof}

%\section{Conclusions}
%Any capacity region of a physically degraded BC with a cooperation link between the decoders can be achieved by rate splitting.

\bibliographystyle{plain}
\bibliography{ref}
\end{document}